\documentclass[12pt]{article}
\usepackage{amsmath}
\usepackage{graphicx,psfrag,epsf}
\usepackage{enumerate}
\usepackage{natbib}
\usepackage{url}

\usepackage[margin = 2.3cm]{geometry}

\usepackage{amssymb, amsthm, ascmac, authblk, bm, booktabs, caption, comment, enumerate, latexsym, lscape, longtable, mathrsfs, mathtools, multirow, natbib, newtxtext, psfrag, setspace, subcaption, thmtools}
\usepackage[stable]{footmisc}

\usepackage{enumitem}

\usepackage{mathtools}
\usepackage{mathrsfs}
\usepackage{color}
\usepackage{pgfplots}
\pgfplotsset{compat=1.16}
\usepackage{here}
\usepackage{multirow}
\usepackage{caption}
\usepackage{authblk}
\usepackage[colorlinks=true, citecolor=teal, linkcolor=teal, urlcolor=teal]{hyperref}
\usepackage{threeparttable}
\usepackage{comment}
\usepackage{stackengine}
\usepackage{tikz}
\usetikzlibrary{arrows.meta}
\usetikzlibrary{bending}

\usepackage{lineno}

\usepackage[capitalize]{cleveref}

\theoremstyle{definition}
\newtheorem{lemma}{Lemma}
\newtheorem{theorem}{Theorem}
\newtheorem{assumption}{Assumption}
\newtheorem*{rmk*}{Remark}

\newcommand{\CIfim}{\text{CI}^{\texttt{fIM}}} 
\newcommand{\CI}{\text{CI}}
\newcommand{\fim}{\texttt{fIM}}
\newcommand{\SE}{\text{SE}}
\newcommand{\Cov}{\text{Cov}}
\newcommand{\Var}{\text{Var}}

\makeatletter
\def\blfootnote#1{%
  \begingroup
  \renewcommand\thefootnote{}\footnote{#1}%
  \addtocounter{footnote}{-1}%
  \endgroup
}
\makeatother

\title{Uniform Confidence Bands for Infinite-Dimensional Partially Identified Parameters}
\author[$\sharp$]{Shunsuke Imai}
\author[$\flat$]{Yuta Okamoto}
\affil[$\sharp$]{Graduate School of Economics, Kyoto University}
\affil[$\flat$]{Graduate School of Economics, Hitotsubashi University}

\begin{document}

\baselineskip=18pt

\maketitle
\blfootnote{${}^{\sharp}$\href{mailto:imai.shunsuke.57n@st.kyoto-u.ac.jp}{imai.shunsuke.57n@st.kyoto-u.ac.jp}. ${}^{\flat}$\href{mailto:yuta.okamoto@r.hit-u.ac.jp}{yuta.okamoto@r.hit-u.ac.jp}. 
First version: \today.}

\begin{abstract}
    Infinite-dimensional parameters are ubiquitous in empirical economics.
    This paper develops an Imbens--Manski--Stoye type confidence band for infinite-dimensional partially identified parameters. 
    In particular, we propose multiplier bootstrap-based construction of a uniform confidence band.
    By employing approximation theorems for suprema of non-centered empirical processes indexed by possibly non-Donsker classes \citep{chernozhukov2016empirical}, we confirm the uniform validity of the proposed procedure.
\end{abstract}

{\textbf{Keywords:} Confidence band, partial identification, uniform inference}

\vspace{0.2cm}

{\textbf{JEL Classification:} C12, C14, C15}

\newpage
\section{Introduction}\label{sec:intro}
Infinite-dimensional parameters are ubiquitous in empirical economics.
A leading example is the distribution of treatment effects, which describes the entire distribution of individual-level treatment effects rather than a finite-dimensional summary such as an average treatment effect \citep{Heckman_etal:1997}. 
Another prominent example is the conditional average treatment effect function with respect to a covariate $X$, which is widely used to summarize treatment effect heterogeneity in empirical work \citep[e.g.][]{lee2017doubly,fan2022estimation,imai2025doubly}. When $X$ is continuously distributed, this object is infinite-dimensional. 
More generally, conditional expectation and conditional quantile functions are central objects in applied econometrics and are inherently infinite-dimensional when the conditioning variables are continuous.

In many empirically relevant settings, however, such infinite-dimensional objects are not point identified. The distribution of treatment effects is generally only partially identified even in experimental settings \citep{Fan_Park_2010, Firpo_Ridder:2019, Heckman_etal:1997}.
The literature has extended partial identification analysis of the treatment effects distribution to panel data and related settings \citep{Callaway:2021, Fan_Yu:2015}, and has studied how additional restrictions can tighten the resulting bounds \citep{Frandsen_Lefgren:2021}. In these extensions, however, the treatment effects distribution typically remains set identified rather than point identified.
Similarly, the conditional average treatment effect function is only partially identified when standard identifying assumptions, such as unconfoundedness, do not hold exactly \citep{Masten_Poirier:2018}. Conditional expectation and quantile functions are also partially identified when outcomes are interval-valued \citep{Beresteanu_Sasaki:2021, LiMoMoPe21, Manski_Tamer:2002}.
Other infinite-dimensional partially identified objects arise in a variety of applications, including marginal treatment effects under sample selection \citep{Bartalotti_etal:2023}, and distributions of valuations in bargaining or auctions \citep{Freyberger_Larsen:2025}.

For a scalar partially identified parameter, the confidence interval proposed by \cite{Imbens_Manski:2004} has become a standard tool for statistical inference.
The key insight underlying their construction is that the length of the identified interval affects the relevant coverage calculation. 
When the identified interval is nearly degenerate, inference resembles the usual point-identified case and calls for a two-sided critical value. 
By contrast, when the identified interval is large relative to sampling uncertainty, coverage is effectively governed by one-sided deviations of the estimated bounds at the endpoints of the identified set. 
The \citeauthor{Imbens_Manski:2004} construction adapts between these two cases and therefore avoids the unnecessary conservatism of applying a conventional two-sided critical value.

Due to this practical appeal, the \citeauthor{Imbens_Manski:2004} confidence interval has been widely used in applied and theoretical work.
For example, \cite{Lee:2009} uses the \citeauthor{Imbens_Manski:2004} confidence interval in his influential bounding approach under sample selection. 
Subsequent work by \cite{Stoye:2009} further clarifies the coverage properties of the Imbens--Manski confidence interval and develops refinements of the original procedure. More recently, \cite{Frandsen_Pond:2025} extends the Imbens--Manski--Stoye confidence interval to vector-valued partially identified parameters.
However, these procedures are designed for finite-dimensional partially identified parameters and do not directly apply to infinite-dimensional objects such as functions or distributions. 

A related literature develops inference procedures based on moment inequalities. Much of this work also focuses on finite-dimensional parameters, including \cite{Andrews_Soares:2010} and \cite{Rosen08}.
Other contributions allow for conditional, continuum-indexed, or increasingly many moment restrictions, but the object of inference is typically still finite-dimensional or a pointwise nonparametric functional rather than an infinite-dimensional partially identified object; see, for example, \cite{Andrews_Shi:2014} and \cite{Menzel:2014}.
More recently, \cite{CCK19moment} develops inference procedures for many moment inequalities, allowing the number of inequalities to be much larger than the sample size, and their framework accomodates infinite-dimensional parameters under some additional assumptions.
Nevertheless, our proposed inference method is different in that our approach exploits the same endpoint structure that underlies the \citeauthor{Imbens_Manski:2004} confidence interval discussed above, and the width of the identified intervals enters into the coverage calculation.
This gives sharpless of the resulting confidence band likewise \cite{Imbens_Manski:2004}, \cite{Stoye:2009}, and \cite{Frandsen_Pond:2025}.
However, this improvement requires an additional computational step; in effect, the procedure determine which endpoint of the interval is relevant for coverage at each evaluation point, so we can obtain the tightness at the cost of computational burden. 
In this sense, our method is complementary to the approach of \cite{CCK19moment}.
For more comprehensive reviews of inference under partial identification and moment inequalities, see \cite{Canay_Shaikh:2017} and \cite{Molinari:2020}.

In the absence of a general inference procedure for infinite-dimensional partially identified objects, applied work has often relied on pointwise inference.
This paper is intended to fill this gap. 
In particular, we propose multiplier bootstrap-based construction of a uniform confidence band for infinite-dimensional partially identified parameters.
By employing approximation theorems for suprema of non-centered empirical processes indexed by possibly non-Donsker classes \citep{chernozhukov2016empirical}, we confirm the uniform validity of the proposed procedure.

\paragraph{Remark on ongoing work.}
The present version focuses on the multiplier-bootstrap implementation. 
We are currently investigating implementation strayegies that are computationally lighter while preserve coverage accuracy. 
One direction is to develop an analytic critical value as an alternative to the multiplier bootstrap. 
Another is to quantify how fine a grid is required for valid approximation between the supremum over the effectively continuous index set and a maximum over finitely many grid points when computing the critical value. 
Future revisions will establish the theoretical validity of these approaches and compare their finite-sample accuracy and computational cost with those of the current implementation through simulations. 

\paragraph{Setup.} 
Suppose, at a point $x\in\mathcal{X}$, $\theta(x) = \theta(x;F)$ is parameter of interest which is not always point identified but we know that $\theta(x; F) \in [\theta_{l,n}(x;F), \theta_{u,n}(x;F)]$ holds,
where a probability distribution $F$ lies in a set of probability distributions $\mathcal{F}$ that satisfies the assumptions given below. 
We often suppress the depedence of $ \theta(x;F)$ on $F$ if no confusion can arise. 
For each $x\in\mathcal{X}$, we define $\Theta_n(x;F) \coloneqq [\theta_{l,n}(x;F), \theta_{u,n}(x;F)]$.
We define the width of $\Theta_{n}(x)$ as $\Delta_n(x) \coloneqq \theta_{u,n}(x)- \theta_{l,n}(x) $.
We also write $\Theta_n(F) \coloneqq \{\theta  : \mathcal{X}\to \mathbb{R} : \theta_{l,n}(x;F) \le \theta(x;F) \le \theta_{u,n}(x;F) \text{ for all } x\in\mathcal{X}\}$.
For each $t\in\{u,l\}$,  let $\hat{\theta}_{t,n}(x)$ be an estimator of $\theta_{t}(x)$ and  $\sigma_{t,n}^2(x) \coloneqq \Var[\hat{\theta}_{t,n}(x)]/r_{t,n}$, where $r_{t,n}$ is a pointwise convergence rate of the estimator to its expectation.

\paragraph{Overview.}

Our aim is to establish the validity of
\begin{align*}
    \hat{\CI}^\fim(x) \coloneqq \left[ \hat{\theta}_{l,n}(x) - \hat{c}_{n}^\fim \widehat{\SE}_{l,n}(x), \quad \hat{\theta}_{u,n}(x) + \hat{c}_{n}^\fim \widehat{\SE}_{u,n}(x)\right],
\end{align*}
where $\hat{c}^\fim_{n}$ is an empirical critical value of $c_n^\fim$ which satisfies
\begin{align*}
    P_F(\theta(x) \in \CIfim_{n}(x) \text{ for all } x\in\mathcal{X} ) = P_F\left(  A_{n}(x) \cap B_{n}(x) \text{ for all } x\in\mathcal{X} \right),
\end{align*}
where we define the events
\begin{align*}
    & A_n(x) \coloneqq \left\{  \theta(x) \le \hat{\theta}_{u,n} (x) + {c}^{\fim}_{n} \frac{\sigma_u(x)}{r_n^{1/2}} \right\}, \quad B_n(x) \coloneqq \left\{ \hat{\theta}_{l,n}(x) -  {c}^{\fim}_{n} \frac{\sigma_l(x)}{r_n^{1/2}}  \le \theta(x) \right\}.
\end{align*}
In a word, $A_n(x)$ is the event in which the upper endpoint of the confidence band lies above $\theta(x)$ and $B_n(x)$ is the event in which the lower endpoint of the confidence band lies below $\theta(x)$.
After some theoretical investigation, such $c^\fim_{n}$ can be, in turn, identified as the solution of
\begin{align*}
    \inf_{\bm{v}\in\{0,1\}^\mathcal{X}} P_F\left[ \sup_{(t,x)\in\{u,l\}\times\mathcal{X}} (\mathbb{G}_n(\tilde\psi_{t,x}) - B_n(t,x;\bm{v})) \le c_{n}^\fim \right] = 1-\alpha,
\end{align*}
where $\bm{v} \coloneqq \{v(x):x\in\mathcal{X}\}$, $\mathbb{G}_n(\tilde\psi_{t,x}) - B_n(t,x;\bm{v})$ is a non-centered empirical process.
Detail definitions of $\mathbb{G}_n(\tilde\psi_{t,x})$ and $B_n(t,x;\bm{v})$ are presented later. 
Given this representation, we can theoretically validate the critical value building on the approximation theorems for suprema of non-centered empirical processes by \cite{chernozhukov2016empirical}.

\paragraph{Paper Organization.}

The rest of the paper is organized as follows. 
In \cref{subsec:gauss}, we prove the uniform validity of the Gaussian approximation.
\cref{subsec:implementation} presents the implement method of the critical value and its uniform validity.
The proofs of theoretical results are in \cref{sec:proof}.

\section{Gaussian Approximation and  Valid Bootstrap Procedure}

\subsection{Gaussian Approximation} \label{subsec:gauss}

\cref{as:assumption1}(i) and (ii) are the uniform counterparts of Assumption 1(i), (ii) in \cite{Stoye:2009} and Assumption 1 in \cite{Frandsen_Pond:2025}.
\cref{as:assumption1}(iii) is that of the assumption in the statement of Lemma 3 in \cite{Stoye:2009}.
\begin{assumption} \label{as:assumption1} \mbox{}
\begin{itemize}
    \item[(i)] 
        For each $t\in\{u,l\}$ and $\varepsilon>0$, there are estimators $\hat{\theta}_{t,n}(x)$ which admit the asympotically linear approximations uniformly over $x\in \mathcal{X}$ and $F\in\mathcal{F}$ as follows
        \begin{align*}
            & r_n^{1/2}\{\hat{\theta}_{t,n}(x) - \theta_{t,n}(x)\} \coloneqq \frac{1}{\sqrt{n}}\sum_{i=1}^n \psi_{t,n}(X_i;x) + \text{Bias}_{t,n}(x) + o_{P_F}(n^{-\varepsilon}),
        \end{align*}
        where $\{\psi_{t,n}(X_i;x)\}_{i=1}^n$ is a centered and possibly sample size dependent i.i.d.~random variables, $r_n^{1/2}$ is a sample-size dependent sequence which indicates the convergence rate of the estimator and the bias term is uniformly and strictly dominated by the stochastic term.
    \item[(ii)] 
        For all $F\in\mathcal{F}$ and $n\ge 1$, $\inf_{(t,x)\in \{u,l\}\times\mathcal{X}}\sigma_{t,n}(x)$ is bounded away from below by a finite positive constant and $\sup_{(t,x)\in\{u,l\}\times\mathcal{X}} \sigma_{t,n}(x)$ is bounded from above by a finite positive constant. 
        Also, for all $F\in\mathcal{F}$ and $n\ge 1$,  $\sup_{x\in\mathcal{X}}\Delta_n(x)<\infty$.
        In addition, let $\hat{\sigma}_{t,n}(x)$ be an estimator for the standard error and satisfies $\sup_{(t,x)\in \{u,l\}\times \mathcal{X}} |\hat{\sigma}_{t,n}(x)/\sigma_{t}(x) - 1| = o_p(1)$.
    \item[(iii)]
        For all $F\in\mathcal{F}$ and $n\ge 1$, $P_F\{\hat{\theta}_{u,n}(x) \geq \hat{\theta}_{l,n}(x) \text{ for all } x\in\mathcal{X}\}=1$.
\end{itemize}
\end{assumption}

The uniform counterpart of Assumption 3 in \cite{Stoye:2009} and Assumption 2 in \cite{Frandsen_Pond:2025} is given by as follows.
\begin{assumption} \label{as:assumption2} \mbox{}
\begin{itemize}
    \item[(i)] There exists a sequence $\{k_n\}$ such that $k_n \to 0$, $k_n \cdot r_n^{1/2}\to \infty$, and $\sup_{x\in\underline{\mathcal{X}}} r_n^{1/2}|\hat{\Delta}_n(x) - \Delta_n(x)| \xrightarrow{P_F} 0$ uniformly in $F\in\mathcal{F}$ over $\underline{\mathcal{X}} \coloneqq \{x : \Delta_n(x) \le k_n\}$.
    \item[(ii)] $\Delta_n(x)$ satisfies $\sqrt{r_n}\Delta_n(x) \to \Delta_\star(x)\in[0,+\infty]$ for all $x\in\mathcal{X}$ and $F\in\mathcal{F}$.
\end{itemize}

\end{assumption}

We assume that the empirical critical value is consistent. 
We present how to implement such a critical value in the following section.
\begin{assumption} \label{as:assumption3}
The empirical critical value $\hat{c}_n^\fim$ is consistent to $c_n^\fim$.
\end{assumption}

In order to establish a Gaussian approximation, we define some notations.
Let
\begin{align*}
    s_u \coloneqq (s_{u,1}, s_{u,2}) = (1,0) , \quad s_l \coloneqq (s_{l,1}, s_{l,2}) = (0,1) ,
\end{align*}
and
\begin{align*}
    S_{u,n}(x) \coloneqq r_{n}^{1/2} \frac{\hat{\theta}_{u,n}(x) - \theta_{u,n}(x)}{\sigma_u(x)} , \quad 
    S_{l,n}(x) \coloneqq r_{n}^{1/2} \frac{\hat{\theta}_{l,n}(x) - \theta_{l,n}(x)}{\sigma_l(x)} .
\end{align*}
For each  $(t,x) \in \{u,l\} \times \mathcal{X}$, define
\begin{align*}
    G_{n}(t,x) & \coloneqq s_t
    \begin{bmatrix}
        - S_{u,n}(x) , S_{l,n}(x)
    \end{bmatrix}^\top
    = -s_{t,1} S_{u,n}(x)  + s_{t,2} S_{l,n}(x),
\end{align*}
and
\begin{align*}
    B_{n}(t,x) &\coloneqq s_t 
    \begin{bmatrix}
        r_{n}^{1/2}\frac{(1-v(x))\Delta_{n}(x)}{\sigma_{u}(x)} , r_{n}^{1/2} \frac{v(x) \Delta_{n}(x)}{\sigma_l(x)} 
    \end{bmatrix}^\top \\
    &= s_{t,1} \left(r_{n}^{1/2}\frac{(1-v(x))\Delta_{n}(x)}{\sigma_{u}(x)}\right) + s_{t,2} \left( r_{n}^{1/2} \frac{v(x) \Delta_{n}(x)}{\sigma_l(x)} \right).
\end{align*}

As mentioned in \cref{sec:intro}, we build on the results in \cite{chernozhukov2016empirical}. 
In order to utilize their results, 
define $\tilde\Psi \coloneqq \{ \tilde\psi_{t,x} : (t,x)\in \{u,l\}\times \mathcal{X} \}$, with $\tilde{\psi}_{t,x}(\cdot) \coloneqq -s_{t,1} \psi_{u,n}(\cdot,x)/\sigma_u(x) + s_{t,2}  \psi_{l,n}(\cdot,x)/\sigma_l(x)$. 
In some case, we define $B : \tilde\Psi \mapsto \mathbb{R}$ so that $B(\cdot) \coloneqq \inf\{ B_n(t,x) : \tilde\psi_{t,x} = \cdot\}$. 
For each $F\in\mathcal{F}$ and $\eta >0$, let $N_B(\eta ; F)$ be the minimal integer $N$ such that there exist $\tilde\psi_{t_1,x_1}, \dots, \tilde\psi_{t_N,x_N}\in \tilde\Psi$ with the property that, for every $\tilde\psi_{t,x}\in\tilde\Psi$, there exists $1\le j \le N$ with $|B(\tilde\psi_{t,x}) - B(\tilde\psi_{t_j,x_j})| < \eta$.
Also define $K_n(F) \coloneqq \log N_B(\eta;F) + v\{\log n \vee \log(Ab(F)/\sigma(F))\}$.
We introduce following \cref{as:assumption4}(i)-(iii), which corespond assumptions (A), (B) and (C) in \cite{chernozhukov2016empirical}.
\cref{as:assumption4}(iv) is an additional assumption for Gaussian approximation by \cite{chernozhukov2016empirical} (cf. the statement of their Theorem 2.1).
\begin{assumption} \label{as:assumption4}\mbox{} 
\begin{itemize}
    \item[(i)] For all $F\in\mathcal{F}$, there exists a countable subset $\mathcal{G}_{\tilde\Psi}$ of $\tilde\Psi$ such that for any $\tilde\psi_{t,x}\in \tilde\Psi$, there exist sequences  $g_{m}\in \mathcal{G}_{\tilde\Psi}$ with $g_{m} \to \tilde\psi_{t,x}$ pointwise and $B(g_{m}) \to B(\tilde\psi_{t,x})$.
    \item[(ii)] For all $F\in\mathcal{F}$, the class of functions $\tilde\Psi$ is VC-type with a measurable envelope $\bar{\Psi}$ and constants $A\ge e$ and $v\ge 1$. 
    \item[(iii)] For each $F\in\mathcal{F}$, there exist constants $b(F) \ge \sigma(F) >0$ and $q\in[4,\infty)$ such that $\sup_{(t,x)\in \{u,l\}\times\mathcal{X}} \mathbb{E}_F[|\tilde\psi_{t,x}(X_i)|^k] \le \sigma(F)^2 b(F)^{k-2}$ for $k=2,3,4$ amd $ \mathbb{E}_F[|\bar\Psi(X_i)|^q] \le b(F)^q$.
    \item[(iv)] For all $F\in\mathcal{F}$, $K_n(F)^3 \le n$.
\end{itemize}
\end{assumption}

The following assumption makes the Gaussian approximation valid uniformly over $F\in\mathcal{F}$.
\begin{assumption} \label{as:assumption5}\mbox{}
\begin{itemize}
    \item[(i)]$\inf_{F\in\mathcal{F}}\inf_{\tilde\psi_{t,x}\in\tilde\Psi} \Var_F\{G(\tilde\psi_{t,x})\} > 0$. 
    \item[(ii)] $\sup_{F\in\mathcal{F}} (\inf_{\tilde\psi_{t,x}\in\tilde\Psi} \Var_F\{G(\tilde\psi_{t,x})\} )^{-1/2} r_1(F)\{1 + \sqrt{K_n(F)}\} \to 0$ as $n\to\infty$, where $r_1(F) \coloneqq  C_1\{\eta + \delta_n^{(1)}(F)\}$ with
    \begin{align*}
        & \delta_n^{(1)}(F) \coloneqq \frac{b(F)K_n(F)}{\gamma^{1/q }n^{1/2 - 1/q}} + \frac{\{b(F) \sigma^2(F) K_n^2(F)\}^{1/3}}{\gamma_n^{1/3} n^{1/6}},
    \end{align*}
    where $\gamma_n\in(0,1)$ is a sample size-dependent sequence such that $\gamma_n \to 0$.
\end{itemize}
\end{assumption}

\begin{theorem} \label{thm:gaussian_approximation}
Under \cref{as:assumption1}-\ref{as:assumption5}, it holds that
\begin{align*}
    \liminf_{n\to\infty} \inf_{F\in\mathcal{F}}\inf_{\theta\in \Theta_n(F)}  P_F(\theta(x) \in \CIfim_n(x) \text{ for all } x\in\mathcal{X}) = 1-\alpha.
\end{align*}
\end{theorem}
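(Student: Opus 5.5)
The plan is to reduce the coverage event to a statement about the supremum of a non-centered empirical process. I then apply the Gaussian approximation of \cite{chernozhukov2016empirical} uniformly in $F$, and finally use the definition of $c_n^\fim$ together with \cref{as:assumption3} to pin the limit at $1-\alpha$.

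\textbf{Step 1 (reduction to a supremum).} Fix $F\in\mathcal{F}$ and $\theta\in\Theta_n(F)$. Each $\theta(x)$ can be written as $\theta(x)=\theta_{l,n}(x)+v(x)\Delta_n(x)$, and since $\theta(x)$ lies in a closed interval, its coverage only depends on the two worst points. I therefore parametrize by $v(x)\in\{0,1\}$, the endpoint that is binding at $x$; the extreme points dominate the interior ones by monotonicity of the two events in $\theta(x)$. Substituting $\theta_{u,n}=\theta_{l,n}+\Delta_n$ gives
\[
A_n(x)\iff -S_{u,n}(x)-r_n^{1/2}(1-v(x))\Delta_n(x)/\sigma_u(x)\le c_n^\fim ,
\]
and an analogous rewriting gives $B_n(x)\iff S_{l,n}(x)-r_n^{1/2}v(x)\Delta_n(x)/\sigma_l(x)\le c_n^\fim$. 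Hence
\[
P_F\bigl(\cap_x A_n(x)\cap B_n(x)\bigr)=P_F\bigl(\sup_{(t,x)}\{G_n(t,x)-B_n(t,x;\bm v)\}\le c_n^\fim\bigr).
\]
Taking the infimum over $\theta\in\Theta_n(F)$ becomes the infimum over $\bm v\in\{0,1\}^{\mathcal X}$.

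\textbf{Step 2 (linearization).} By \cref{as:assumption1}(i), $G_n(t,x)=\mathbb{G}_n(\tilde\psi_{t,x})+o_{P_F}(n^{-\varepsilon})$ uniformly in $(t,x,F)$, with the bias term absorbed. This uses that the bias is dominated by the stochastic term, and the replacement of $\sigma$ by $\hat\sigma$ via \cref{as:assumption1}(ii). \cref{as:assumption2} handles $x$ where $\Delta_n(x)\le k_n$: there $\hat\Delta_n$ can replace $\Delta_n$ at negligible cost, and elsewhere $B_n$ diverges so that the one-sided constraint is slack, which is what makes the infimum over $\bm v$ well behaved.

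\textbf{Step 3 (Gaussian coupling).} Under \cref{as:assumption4}(i)--(iv), Theorem 2.1 of \cite{chernozhukov2016empirical} yields a tight Gaussian process $G$ on $\tilde\Psi$ and a coupling such that the suprema of $\mathbb{G}_n-B$ and $G-B$ differ by at most $r_1(F)$ with probability at least $1-C\gamma_n-C/n$. The bound is uniform over $B$ because it enters only through $N_B(\eta;F)$ in $K_n(F)$.

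\textbf{Step 4 (anti-concentration).} The anti-concentration inequality for suprema of Gaussian processes, combined with \cref{as:assumption5}, converts the coupling error into a Kolmogorov-distance bound. This bound is of order $(\inf\Var G)^{-1/2}r_1(F)(1+\sqrt{K_n(F)})$ plus $\gamma_n$, uniformly in $F$ and $\bm v$, and it tends to zero. Finally, I replace $\hat c_n^\fim$ by $c_n^\fim$ using \cref{as:assumption3} and continuity of the Gaussian sup-distribution, which again follows from anti-concentration. The definition of $c_n^\fim$ then makes the infimum over $\bm v$ equal $1-\alpha$ up to $o(1)$, giving the liminf statement.

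\textbf{Main obstacle.} I expect the main obstacle to be Step 4 combined with the uniformity over the uncountable family $\bm v\in\{0,1\}^{\mathcal X}$. The covering number $N_B$ and hence $K_n$ must be controlled uniformly in $\bm v$, and the anti-concentration bound must not degrade with the non-centering $B$. I would first try to use that $B$ takes values in $\{0\}\cup\{r_n^{1/2}\Delta_n/\sigma_t\}$, so covering $\tilde\Psi$ suffices. I would then apply the non-centered anti-concentration of \cite{chernozhukov2016empirical}, which depends only on the minimal variance.
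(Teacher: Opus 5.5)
Your Steps 2--4 (linearization via \cref{as:assumption1}, the coupling from Theorem 2.1 of \cite{chernozhukov2016empirical}, anti-concentration, and replacing $\hat c_n^\fim$ by $c_n^\fim$) follow the paper's route. Step 1, however, has a genuine gap. You reduce $\inf_{\theta\in\Theta_n(F)}$ to an infimum over $\bm v\in\{0,1\}^{\mathcal X}$ ``by monotonicity of the two events in $\theta(x)$'', and that justification is not valid.

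Monotonicity says $A_n(x)$ shrinks and $B_n(x)$ grows as $\theta(x)$ increases. This only shows that, for interior $\theta(x)$, the coverage event contains the event that both $\theta_{l,n}(x)$ and $\theta_{u,n}(x)$ lie in $\CIfim_n(x)$. That event has smaller probability than either endpoint's coverage. It does not show that an interior point is covered with probability at least the minimum of the endpoint coverages, and in general it is not. Suppose the band at $x$ is, with probability $1/2$ each, a short interval around $\theta_{l,n}(x)$ or around $\theta_{u,n}(x)$. Then each endpoint is covered with probability $1/2$, but the midpoint is never covered. Since $c_n^\fim$ is calibrated only over $\{0,1\}^{\mathcal X}$, this reduction is exactly what turns the calibration into the lower bound $1-\alpha$ for every $\theta\in\Theta_n(F)$. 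The proof does not go through without it.

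The missing idea is log-concavity of the Gaussian law. It has to be applied after the coupling, to the limiting Gaussian object, not to the finite-sample $P_F$.

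The paper keeps $\theta(x)=\theta_{l,n}(x)+v(x)\Delta_n(x)$ with $\bm v\in[0,1]^{\mathcal X}$ through the reduction and the Gaussian approximation of \cref{lem:uniform_Gauss}. It arrives at the infimum over $\bm v\in[0,1]^{\mathcal X}$ of $P_F[G(\tilde\psi_{t,x})-B_n(t,x;\bm v)\le c_n^\fim \text{ for all } (t,x)]$. Because $B_n(t,x;\bm v)$ is affine in $\bm v$, the set of pairs $(g,\bm v)$ satisfying all constraints is jointly convex. By Pr\'ekopa's theorem this probability is therefore log-concave, hence quasi-concave, in $\bm v$. On finitely many coordinates its minimum over the cube is attained at a vertex; this is the argument in \cite{Stoye:2009}, p.~1310, and \cite{Frandsen_Pond:2025}, p.~8. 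The paper then uses \cref{as:assumption4}(i) to pass from countable or finite subsets to all of $\mathcal X$.

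To repair your argument:
\begin{enumerate}
\item Carry $\bm v\in[0,1]^{\mathcal X}$ through Steps 2--4. Your uniformity argument for $N_B(\eta;F)$ still works, because $B_n(\cdot;\bm v)$ then ranges over the interval $[0,\,r_n^{1/2}\sup_x\Delta_n(x)/\inf_{t,x}\sigma_t(x)]$ rather than the two-point set you describe.
\item Insert the log-concavity reduction at the Gaussian level, before you invoke the definition of $c_n^\fim$.
\end{enumerate}
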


\subsection{Implementation of Critical Value} \label{subsec:implementation}

In this section, we consider the implementation of the critical value via multiplier bootstrap building on \cite{chernozhukov2016empirical}.
For each $b= 1, \dots B$, we generate a set of i.i.d.~bootstrap weights $\{w_i^{\star,b}\}_{i=1}^n$ independently of the origina data.
Following \cite{chernozhukov2016empirical}, we suppose that  $\{w_i^{\star,b}\}_{i=1}^n$ follows the standard normal distribution.
In each bootstrap iteration, we compute
\begin{align*}
    \mathbb{G}_n^{\star,b}(\tilde\psi^{\text{s.a.}}_{t,x}) \coloneqq \frac{1}{\sqrt{n}}\sum_{i=1}^n w_i^{\star,b}  \left(\tilde{\psi}^{\text{s.a.}}_{t,x}(X_i)  - \frac{1}{n}\sum_{i=1}^n \tilde{\psi}^{\text{s.a.}}_{t,x}(X_i)  \right),
\end{align*}
where $\tilde{\psi}^{\text{s.a.}}_{t,x}$ is a sample analogue of $\tilde\psi_{t,x}$ (s.a. indicates ``sample analogue'').
Set
\begin{align*}
    \hat T^{\star,b}_n(v) \coloneqq \sup_{(t,x)\in\{u,l\}\times\mathcal{X}} \left( \mathbb{G}_n^{\star,b}(\tilde{\psi}^{\text{s.a.}}_{t,x}) - \hat{B}_n(t,x;v) \right),
\end{align*}
where $\hat{B}_n(t,x;v)$ is a sample analogue of $B_n(t,x;v)$, and
\begin{align*}
    \hat q^{\star}_n(v) \coloneqq \text{ $(1-\alpha)$-quantile of $\{\hat T^{\star,b}_n(v)\}_{b=1}^B$}.
\end{align*}
Then, the critical value is defined as
\begin{align}
    \hat{c}^\fim_n \coloneqq \sup_{v\in\{0,1\}^\mathcal{X}}  \hat q^{\star}_n(v). \label{eq:cv_mb}
\end{align}

\begin{assumption}  \mbox{}\label{as:assumption6}
\begin{itemize}
    \item[(i)] The sample analogues $\tilde{\psi}^{\text{s.a.}}_{t,x}$ and $\hat{B}_n(t,x;v)$ are consistent uniformly over $F\in\mathcal{F}$, $(t,x)\in\{u,l\}\times\mathcal{X}$ and $(t,x,v)\in\{u,l\}\times\mathcal{X}\times \bm{v}$.
    \item[(ii)] $\sup_{F\in\mathcal{F}} (\inf_{\tilde\psi_{t,x}\in\tilde\Psi} \Var_F\{G(\tilde\psi_{t,x})\} )^{-1/2} r_1^\star(F)\{1 + \sqrt{K_n(F)}\} \to 0$ as $n\to\infty$, where $r_1^\star(F) \coloneqq  C_3\{\eta + \delta_n^{(2)}(F)\}$ with
        \begin{align*}
            \delta_n^{(2)}(F) \coloneqq \frac{b(F) K_n(F)}{\gamma_n^{1 + 1/q} n^{1/2 -1/q}} + \frac{\{b(F)\sigma(F)K_n(F)^{3/2}\}^{1/2}}{\gamma_n^{1 + 1/q} n^{1/4}}.
        \end{align*}
    \item[(iii)]  For all $F\in\mathcal{F}$, $K_n(F) \le n$.
\end{itemize}

\end{assumption}

\begin{theorem} \label{thm:vaild_mb}
Under \cref{as:assumption1}, \ref{as:assumption4}, \ref{as:assumption5} and \ref{as:assumption6}, the critical value defined as \cref{eq:cv_mb} satisfies \cref{as:assumption3}.
\end{theorem}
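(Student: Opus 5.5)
The plan is to show $\hat c^\fim_n - c^\fim_n \to 0$ in probability uniformly over $F\in\mathcal{F}$, working one selection $v$ at a time. For each fixed $v\in\{0,1\}^\mathcal{X}$, let $T_n(v) \coloneqq \sup_{(t,x)}(\mathbb{G}_n(\tilde\psi_{t,x}) - B_n(t,x;v))$ and let $c_n(v)$ be its $(1-\alpha)$-quantile. The target $c^\fim_n$ is characterized by $\inf_v P_F(T_n(v)\le c^\fim_n) = 1-\alpha$. By monotonicity of distribution functions this gives $c^\fim_n = \sup_v c_n(v)$, provided the relevant distribution functions are continuous and strictly increasing near the quantiles.

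That proviso is supplied by the Gaussian approximation. Under \cref{as:assumption4} and \cref{as:assumption5}, Theorem 2.1 of \cite{chernozhukov2016empirical} couples $T_n(v)$ with the supremum $\tilde T(v) \coloneqq \sup(G(\tilde\psi_{t,x}) - B(t,x;v))$ of a centered Gaussian process, up to an error $r_1(F)$. Their anti-concentration inequality for non-centered Gaussian suprema, which uses the variance lower bound in \cref{as:assumption5}(i), converts this coupling into the Kolmogorov-distance bound
\[
\sup_v \sup_{y}|P_F(T_n(v)\le y) - P(\tilde T(v)\le y)| \lesssim r_1(F)(1+\sqrt{K_n(F)})(\inf\Var)^{-1/2} \to 0 .
\]
Here $K_n(F)$ absorbs the entropy of the bias class through $N_B(\eta;F)$. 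The key point is that this entropy does not depend on $v$, since each $B(\cdot;v)$ takes values in a union of two function families, so the bound is uniform in $v$.

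The bootstrap side mirrors this, using the multiplier bootstrap theorem of \cite{chernozhukov2016empirical} (their Theorem 2.2/2.3 for non-centered suprema). On an event of probability at least $1-\gamma_n$, the conditional law of $\sup(\mathbb{G}^\star_n(\tilde\psi_{t,x}) - B_n(t,x;v))$ given the data is within $r_1^\star(F)$ of the law of $\tilde T(v)$. Anti-concentration together with \cref{as:assumption6}(ii),(iii) again makes this an $o(1)$ Kolmogorov bound uniformly in $v$ and $F$.

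Next, I would replace $\tilde\psi$ and $B_n$ by their sample analogues $\tilde\psi^{\text{s.a.}}$ and $\hat B_n$. By \cref{as:assumption6}(i), together with the standard-error consistency in \cref{as:assumption1}(ii), the resulting perturbation of $\hat T^\star_n(v)$ is $o_P(1)$ uniformly in $(t,x,v)$. A supremum is $1$-Lipschitz in sup-norm, and the Gaussian anti-concentration makes distribution functions equicontinuous, so quantiles shift by $o_P(1)$. This yields $\sup_v|\hat q^\star_n(v) - c_n(v)| \to 0$ in probability, once we also take $B\to\infty$ so that the Monte Carlo quantile converges to the conditional quantile. Finally, $|\sup_v \hat q^\star_n(v) - \sup_v c_n(v)| \le \sup_v|\hat q^\star_n(v)-c_n(v)|$ gives $\hat c^\fim_n - c^\fim_n = o_P(1)$ uniformly over $\mathcal{F}$, which is \cref{as:assumption3}.

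\textbf{Main obstacle.} The hardest step is uniformity over the uncountable selection set $\{0,1\}^\mathcal{X}$. Both the Kolmogorov bounds and the sample-analogue step must hold simultaneously for all $v$, and $\sup_v$ is not a finite maximum. I would handle this by noting that every $T_n(v)$ is dominated between the suprema of the same process over a fixed enlarged index set $\{u,l\}\times\mathcal{X}\times\{0,1\}$. That set is VC-type with the same constants, with $B$ taking at most twice as many values, so a single application of the theorems controls all $v$ at once. I would also need to argue that quantile consistency requires strict monotonicity of $y\mapsto P(\tilde T(v)\le y)$ near $c_n(v)$, uniformly in $v$. This should follow from the Gaussian density bound used in the anti-concentration argument, but it is the point I am least sure of.
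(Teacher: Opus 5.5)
Your core chain is the paper's proof: Theorem~2.2 of \cite{chernozhukov2016empirical} for the non-centered multiplier supremum, Markov's inequality to pass to $P^\star$, the anti-concentration lemmas (with Assumption~5(i) and the rate in Assumption~6(ii)) to get a conditional Kolmogorov bound, and Assumption~6(i) for the sample analogues. The paper stops at that Kolmogorov bound for a single fixed $v$. Your extra steps are what Assumption~3 actually requires: $c^{\fim}_n=\sup_v c_n(v)$ (which holds for generalized quantiles with no proviso), Theorem~2.1 for $T_n(v)$, and $|\sup_v a_v-\sup_v b_v|\le\sup_v|a_v-b_v|$.

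\textbf{Uniformity in $v$.} The device you propose here fails. For $T_n(v)$ there is no issue, because its Kolmogorov bound is a deterministic inequality with $v$-free constants. The bootstrap bound, however, holds on a data event produced by Theorem~2.2 and Markov's inequality separately for each $v$, and $\{0,1\}^{\mathcal{X}}$ is uncountable. Write $b(u,x,w)=r_n^{1/2}(1-w)\Delta_n(x)/\sigma_u(x)$ and $b(l,x,w)=r_n^{1/2}w\Delta_n(x)/\sigma_l(x)$. Then $\min_{w\in\{0,1\}}b(t,x,w)=0$, so the supremum over $\{u,l\}\times\mathcal{X}\times\{0,1\}$ is just the centered $\sup_{(t,x)}\mathbb{G}_n(\tilde\psi_{t,x})$. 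That bounds every $T_n(v)$ from above. But the theorems of \cite{chernozhukov2016empirical} couple a single supremum, and such a coupling says nothing about suprema over the $v$-dependent subsets $\{(t,x,v(x))\}$.

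The enlarged set has to be used at the process level instead. Conditionally on the data, $\mathbb{G}^\star_n$ is exactly Gaussian with covariance $\hat\Sigma$. Fix one finite net of $\{u,l\}\times\mathcal{X}\times\{0,1\}$, and approximate the bootstrap supremum for each $v$ by a maximum over a subset of that net. The data events that matter are then $\max|\hat\Sigma-\Sigma|$ on the net and the modulus of continuity of $\mathbb{G}^\star_n$, and neither involves $v$. The Gaussian comparison bound depends only on the covariance discrepancy, not on the common mean shift. Nazarov-type anti-concentration is uniform over mean vectors and subsets.

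\textbf{From Kolmogorov distance to quantiles.} This is the step you flag, and anti-concentration points the wrong way. It bounds $P(|\tilde T(v)-y|\le\epsilon)$ from above, which rules out jumps in the distribution function. Quantile consistency instead needs to rule out flat stretches near the $(1-\alpha)$-quantile, which is a lower bound. Its constant also grows like $\sqrt{K_n(F)}$. So it gives no uniform equicontinuity, and an $o_P(1)$ sample-analogue perturbation does not by itself yield a Kolmogorov bound.

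The right tool is Gaussian isoperimetry. $\tilde T(v)$ is a $\bar\sigma$-Lipschitz functional of the underlying Gaussian element, where $\bar\sigma^2\coloneqq\sup_{(t,x)}\Var\{G(\tilde\psi_{t,x})\}$. Its distribution function $F_v$ therefore satisfies $F_v^{-1}(p_2)-F_v^{-1}(p_1)\le\bar\sigma\{\Phi^{-1}(p_2)-\Phi^{-1}(p_1)\}$ for every $v$ and $p_1<p_2$. This gives quantile continuity at $1-\alpha$ uniformly in $(v,F,n)$, provided $\bar\sigma$ is bounded, which the standardization by $\sigma_t(x)$ is meant to ensure. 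An $o_P(1)$ sup-norm perturbation then moves quantiles by $o_P(1)$ directly, with no $\sqrt{K_n}$ penalty.

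Finally, replacing the Monte Carlo quantile by the conditional one, simultaneously over uncountably many $v$ with the same $B$ draws, is not automatic. It is cleaner to analyze the $B=\infty$ version.
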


\section{Proofs} \label{sec:proof}

\subsection{Proofs for \cref{subsec:gauss}}

Before the proof of main theorem, we provide the following lemmas.
\begin{lemma} \label{lem:uniform_Gauss}
Under \cref{as:assumption1}, \ref{as:assumption4} and \ref{as:assumption5}, it holds that
\begin{align*}
    \sup_{F\in\mathcal{F}} \left| P_F(\theta(x) \in \CIfim_n(x) \text{ for all } x\in\mathcal{X} ) -  P_F\left[ \sup_{\tilde\psi_{t,x}\in\tilde\Psi} (G(\tilde\psi_{t,x}) - B(\tilde\psi_{t,x})) \le c_{n}^\fim \right]\right| = o(1),
\end{align*} 
where the covariance function of the centered Gaussian process $\{G(\tilde \psi_{t,x}) : (t,x)\in \{u,l\}\times\mathcal{X}\}$ is 
\begin{align*}
    \Cov\{G(\tilde\psi_{t,x}), G(\tilde\psi_{t',x'})\} = s_t
    \begin{bmatrix}
        \Sigma_{uu}(x,x') & -\Sigma_{ul}(x,x') \\ -\Sigma_{lu}(x,x') & \Sigma_{ll}(x,x')
    \end{bmatrix} s_{t'}^\top
\end{align*}
with $\Sigma_{tt'}(x,x') \coloneqq \mathbb{E}_F[\psi_t(X;x)\psi_{t'}(X;x')]/\sigma_t(x)\sigma_{t'}(x')$.
\end{lemma}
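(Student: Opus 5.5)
The plan has three steps: rewrite the coverage event as a supremum of a non-centered empirical process, apply the Gaussian approximation of \cite{chernozhukov2016empirical}, and then account for the remainder terms. Throughout, the selector $v(x)$ records which endpoint of $\Theta_n(x)$ lies closer to $\theta(x)$. Write $\theta(x) = \theta_{l,n}(x) + \lambda(x)\Delta_n(x)$ with $\lambda(x)\in[0,1]$. Then $A_n(x)$ is equivalent to
\begin{align*}
-S_{u,n}(x) - r_n^{1/2}(1-\lambda(x))\Delta_n(x)/\sigma_u(x) \le c_n^\fim ,
\end{align*}
and $B_n(x)$ is equivalent to
\begin{align*}
S_{l,n}(x) - r_n^{1/2}\lambda(x)\Delta_n(x)/\sigma_l(x) \le c_n^\fim .
\end{align*}
So the coverage event is
\begin{align*}
\Bigl\{\sup_{(t,x)} \bigl(G_n(t,x) - B_n(t,x)\bigr) \le c_n^\fim\Bigr\},
\end{align*}
with $v(x)$ replaced by $\lambda(x)$ in the definition of $B_n(t,x)$. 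Here $G_n$ and $B_n$ are the objects defined before \cref{as:assumption4}. I read $B(\tilde\psi_{t,x})$ in the statement as this bias, with the $\theta$ (equivalently $v$) dependence suppressed. The statement is for a fixed $\theta$, so this identification is exact and involves no approximation.

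Next, I would use \cref{as:assumption1}(i) to write
\begin{align*}
G_n(t,x) = \mathbb{G}_n(\tilde\psi_{t,x}) + R_n(t,x),
\end{align*}
where $\mathbb{G}_n(f) = n^{-1/2}\sum_i f(X_i)$ and $\tilde\psi_{t,x}$ is as defined before \cref{as:assumption4}. The remainder $R_n$ collects two pieces: the $o_{P_F}(n^{-\varepsilon})$ term, uniformly in $(t,x,F)$, and the normalized bias terms. By \cref{as:assumption1}(i) the bias is dominated by the stochastic term; I would interpret this as the bias being $o(1)$ uniformly, absorbed into the remainder. I would also replace $\sigma_t$ by its population version, which \cref{as:assumption1}(ii) makes harmless because the $\sigma$'s are bounded above and below. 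The result is
\begin{align*}
\sup_{(t,x)}\bigl(G_n - B_n\bigr) = \sup_{f\in\tilde\Psi}\bigl(\mathbb{G}_n f - B(f)\bigr) + o_{P}(r_n')
\end{align*}
uniformly in $F$, for some vanishing $r_n'$.

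The core step applies Theorem 2.1 of \cite{chernozhukov2016empirical} for non-centered suprema to the class $\tilde\Psi$ with bias function $B$. That theorem requires four conditions, which correspond to ours as follows:
\begin{itemize}
\item pointwise measurability with convergence of $B$ along the countable subclass, which is \cref{as:assumption4}(i);
\item the VC-type condition with envelope, which is \cref{as:assumption4}(ii);
\item the moment bounds, which are \cref{as:assumption4}(iii);
\item the complexity bound $K_n^3\le n$, which is \cref{as:assumption4}(iv).
\end{itemize}
The theorem then yields a tight Gaussian process $G$ on $\tilde\Psi$ with the covariance of $\tilde\psi$ and a coupling $\tilde Z$. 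On an event of probability at least $1-\gamma_n - C/n$, this coupling satisfies
\begin{align*}
\Bigl|\sup_f(\mathbb{G}_n f - B(f)) - \tilde Z\Bigr| \le r_1(F).
\end{align*}
The covariance formula in the statement follows by expanding $\mathbb{E}[\tilde\psi_{t,x}\tilde\psi_{t',x'}]$ using the sign vector $s_t$. The constants in \cref{as:assumption4} are explicit in $F$, so the bound holds for each $F$ with the same universal $C_1$, and hence uniformly over $\mathcal{F}$.

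Finally, I would pass from the coupling to distribution functions using the anti-concentration inequality for suprema of non-centered Gaussian processes. This inequality is Lemma A.1 / Theorem 2.1 in \cite{chernozhukov2016empirical}: the Lévy concentration of $\sup_f(Gf - B(f))$ at scale $\epsilon$ is bounded by
\begin{align*}
C\,\underline\sigma^{-1}\epsilon\bigl(1+\sqrt{\log N_B(\eta)}+\sqrt{v\log(\cdot)}\bigr),
\end{align*}
where $\underline\sigma^2 = \inf_f \Var(Gf)$. Taking $\epsilon = r_1(F)+r_n'$ and combining this with the coupling bound gives
\begin{align*}
\sup_c\Bigl|P_F\bigl(\sup(\mathbb{G}_n-B)\le c\bigr) - P\bigl(\sup(G-B)\le c\bigr)\Bigr| \le C\underline\sigma^{-1}(r_1+r_n')\bigl(1+\sqrt{K_n}\bigr) + \gamma_n + o(1).
\end{align*}
This vanishes uniformly in $F$ by \cref{as:assumption5}(i)--(ii). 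Evaluating at $c=c_n^\fim$ gives the lemma.

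I expect the main obstacle to be the remainder $R_n$. Its rate must be made compatible with the anti-concentration factor $\sqrt{K_n}$: one needs $r_n'\sqrt{K_n}\to0$. I would first choose $r_n' = n^{-\varepsilon}$ and argue $\sqrt{K_n}\,n^{-\varepsilon}\to0$ using $K_n^3\le n$ together with a suitable choice of $\varepsilon$. If that fails, I would fall back on strengthening the reading of \cref{as:assumption1}(i) so that the remainder is $o_P(K_n^{-1/2})$ uniformly.
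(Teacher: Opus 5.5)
Your proposal is correct and follows essentially the same route as the paper: rewrite the coverage event as the supremum of the non-centered process $G_n - B_n$, replace $G_n$ by $\mathbb{G}_n(\tilde\psi_{t,x})$ via Assumption 1(i), apply the coupling of Theorem 2.1 in \cite{chernozhukov2016empirical}, and use Gaussian anti-concentration to obtain a Kolmogorov-distance bound that vanishes uniformly in $F$ by Assumption 5. Your handling of the linearization remainder is more careful than the paper's. You fold the remainder into the anti-concentration scale and require $r_n'\sqrt{K_n}\to 0$, whereas the paper simply asserts that uniform $o_P(1)$ closeness of $G_n$ and $\mathbb{G}_n$ is enough, which is not automatic when $K_n$ grows.
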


\begin{proof}
The coverage probability in terms of $\theta(x)$ , $\CIfim_{n}(x)$ is defined so that it holds that
\begin{align*}
    P_F(\theta(x) \in \CIfim_{n}(x) \text{ for all } x\in\mathcal{X} ) = P_F\left( \tilde A_{n}(x) \cap\tilde B_{n}(x) \text{ for all } x\in\mathcal{X} \right),
\end{align*}
where we define the events
\begin{align*}
    & \tilde{A}_{n}(x) \coloneqq \left\{ -{c}^{\fim}_{n} - r_{n}^{1/2}\frac{\theta_{u,n}(x) - \theta(x)}{\sigma_{u}(x)} \le r_{n}^{1/2} \frac{\hat{\theta}_{u,n}(x) - \theta_{u,n}(x)}{\sigma_u(x)}\right\}, \\
    & \tilde{B}_{n}(x) \coloneqq \left\{ r_{n}^{1/2} \frac{\hat{\theta}_{l,n}(x) - \theta_{l,n}(x)}{\sigma_l(x)} \le {c}_{n}^\fim + r_{n}^{1/2} \frac{\theta(x) - \theta_{l,n}(x)}{\sigma_l(x)} \right\}.
\end{align*}
Also, we parametrize $\theta(x) = \theta_{l,n}(x) + v(x)\Delta_{n}(x) = \theta_{u,n}(x) - (1-v(x))\Delta_{n}(x)$ without loss of generarity. 
Under this parametrization, we can see that the events $\tilde{A}_{n}(x)$ and $ \tilde{B}_{n}(x)$ satisfy
\begin{align*}
    \tilde{A}_{n}(x) 
    & = \left\{ -{c}^{\fim}_{n} - r_{n}^{1/2}\frac{(1-v(x))\Delta_{n}(x)}{\sigma_{u}(x)} \le r_{n}^{1/2} \frac{\hat{\theta}_{u,n}(x) - \theta_{u,n}(x)}{\sigma_u(x)} \right\} \\
    & = \left\{- r_{n}^{1/2} \frac{\hat{\theta}_{u,n}(x) - \theta_{u,n}(x)}{\sigma_u(x)}  - r_{n}^{1/2}\frac{(1-v(x))\Delta_{n}(x)}{\sigma_{u}(x)} \le c_{n}^\fim \right\},
\end{align*}
and
\begin{align*}
    \tilde{B}_{n}(x) &= \left\{ r_{n}^{1/2} \frac{\hat{\theta}_{l,n}(x) - \theta_{l,n}(x)}{\sigma_l(x)} \le {c}_{n}^\fim + r_{n}^{1/2} \frac{v(x) \Delta_{n}(x)}{\sigma_l(x)} \right\} \\
    &= \left\{  r_{n}^{1/2} \frac{\hat{\theta}_{l,n}(x) - \theta_{l,n}(x)}{\sigma_l(x)}  -  r_{n}^{1/2} \frac{v(x) \Delta_{n}(x)}{\sigma_l(x)} \le c^\fim_{n}\right\}.
\end{align*}
Therefore, it holds that
\begin{align*}
    & P_F(\theta(x) \in \CIfim_{n}(x) \text{ for all } x\in\mathcal{X} ) \\
    & = P_F\left[ \sup_{x\in\mathcal{X}} \left( - S_{u,n}(x)  - r_{n}^{1/2}\frac{(1-v(x))\Delta_{n}(x)}{\sigma_{u}(x)}  \right)\le c^\fim_{n}, \sup_{x\in\mathcal{X}} \left( S_{l,n}(x)  -  r_{n}^{1/2} \frac{v(x) \Delta_{n}(x)}{\sigma_l(x)} \right) \le c^\fim_{n}\right] \\
    & = P_F\left[  \sup_{x\in\mathcal{X}} \left( - S_{u,n}(x)  - r_{n}^{1/2}\frac{(1-v(x))\Delta_{n}(x)}{\sigma_{u}(x)}  \right) \vee \sup_{x\in\mathcal{X}} \left( S_{l,n}(x)  -  r_{n}^{1/2} \frac{v(x) \Delta_{n}(x)}{\sigma_l(x)} \right)\le c^\fim_{n}\right] .
\end{align*}
Then, 
\begin{align*}
    P_F(\theta(x) \in \CIfim_{n}(x) \text{ for all } x\in\mathcal{X} ) = P_F\left[ \sup_{(t,x)\in\{u,l\}\times \mathcal{X}} \left(  G_n(t,x) - B_n(t,x)\right) \le c_{n}^\fim\right].
\end{align*}
Since, under Assumption \ref{as:assumption1}, it holds that
\begin{align*}
    \sup_{F\in\mathcal{F}} P_F\left(  \sup_{(t,x)\in\{u,l\}\times\mathcal{X}}\left| G_n(t,x) - \mathbb{G}_n(\tilde\psi_{t,x})\right|  \ge \varepsilon \right) \to 0, \quad \forall \varepsilon>0,
\end{align*}
we can regard $G_n(t,x)$ as the empirical process $\mathbb{G}_n(\tilde\psi_{t,x})$ asymptotically, therefore we can approximate the uniform coverage probability of interest as the distribution function of the supremum of the non-centered empirical process;
\begin{align*}
    \sup_{F\in\mathcal{F}} \left| P_F\left(\theta(x) \in \CIfim_n(x) \text{ for all } x\in\mathcal{X} \right) - P_F\left[ \sup_{(t,x)\in\{u,l\}\times \mathcal{X}} \left(  \mathbb{G}_n(\tilde\psi_{t,x}) - B(\tilde\psi_{t,x})\right) \le c^\fim_{n} \right] \right| = o(1).
\end{align*}
For each $F\in\mathcal{F}$ and every $\gamma_n\in(0,1)$, Theorem 2.1 in \citep{chernozhukov2016empirical} states that, under \cref{as:assumption4}, it holds that 
\begin{align*}
    P_F\left( \left|\sup_{\tilde\psi_{t,x}\in\tilde\Psi}\left(  \mathbb{G}_n(\tilde\psi_{t,x}) - B(\tilde\psi_{t,x})\right) - \sup_{\tilde\psi_{t,x}\in\tilde\Psi}(G(\tilde\psi_{t,x}) - B(\tilde\psi_{t,x}))\right| > r_1(F)\right) \le r_2 
\end{align*}
with $ r_2 \coloneqq C_2(\gamma_n + n^{-1})$.
In conjunction with \cref{as:assumption4}(ii) and Lemma 2.1 and Lemma 2.2 in \cite{chernozhukov2016empirical} (with appropriate choice of $r$ and $\delta$  so that $r\delta \lesssim r_1(F)$ and $\phi(\delta) \lesssim r_1(F)$ in their notation), 
we have
\begin{align}
    & \sup_{F\in\mathcal{F}}\sup_{z\in\mathbb{R}}  \left|P_F\left[\sup_{\tilde\psi_{t,x}\in\tilde\Psi} \left(  \mathbb{G}_n(\tilde\psi_{t,x}) - B(\tilde\psi_{t,x})\right) \le z \right] -  P_F\left[ \sup_{\tilde\psi_{t,x}\in\tilde\Psi} (G(\tilde\psi_{t,x}) - B(\tilde\psi_{t,x})) \le z \right]\right| \nonumber\\
    & \le \sup_{F\in\mathcal{F}}\sup_{z\in\mathbb{R}}  \left| P_F\left[ \left|\sup_{\tilde\psi_{t,x}\in\tilde\Psi} (G(\tilde\psi_{t,x}) - B(\tilde\psi_{t,x})) - z\right| \le r_1(F) \right]\right| + r_2\nonumber \\
    & \lesssim \sup_{F\in\mathcal{F}} \frac{2}{\sqrt{\inf_{\tilde\psi_{t,x}\in\tilde\Psi} \Var_F\{G(\tilde\psi_{t,x})\}}} r_1(F) \left(1 + \sqrt{K_n(F)}\right)  + r_2.  \nonumber
\end{align}
Therefore
\begin{align}
    \sup_{F\in\mathcal{F}} \left| P_F(\theta(x) \in \CIfim_n(x) \text{ for all } x\in\mathcal{X} ) -  P_F\left[ \sup_{\tilde\psi_{t,x}\in\tilde\Psi} (G(\tilde\psi_{t,x}) - B(\tilde\psi_{t,x})) \le c_{n}^\fim \right]\right| = o(1). \nonumber
\end{align} 
\end{proof}

\begin{lemma} \label{lem:lem2-uni}
\cref{as:assumption2}(i) is satisfied under \cref{as:assumption1}, \ref{as:assumption4} and \ref{as:assumption5}. 
\end{lemma}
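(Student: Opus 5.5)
We need to show that, with $\hat\Delta_n(x)\coloneqq\hat\theta_{u,n}(x)-\hat\theta_{l,n}(x)$, there is a sequence $k_n\to0$ with $k_n r_n^{1/2}\to\infty$ such that $\sup_{x\in\underline{\mathcal{X}}} r_n^{1/2}|\hat\Delta_n(x)-\Delta_n(x)|\to 0$ in $P_F$-probability uniformly in $F\in\mathcal{F}$. The plan is to decompose $r_n^{1/2}(\hat\Delta_n-\Delta_n)$ using \cref{as:assumption1}(i), and to control the resulting sup of an empirical process by the Gaussian approximation machinery already used in the proof of \cref{lem:uniform_Gauss}. Concretely, for each $x$,
\begin{align*}
r_n^{1/2}\{\hat\Delta_n(x)-\Delta_n(x)\} = \sigma_u(x)S_{u,n}(x)-\sigma_l(x)S_{l,n}(x) = -\sigma_u(x)G_n(u,x)-\sigma_l(x)G_n(l,x).
\end{align*}
By the display in the proof of \cref{lem:uniform_Gauss}, $\sup_{t,x}|G_n(t,x)-\mathbb{G}_n(\tilde\psi_{t,x})|=o_{P_F}(1)$ uniformly in $F$. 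Since $\sigma_t$ is bounded above by \cref{as:assumption1}(ii), the problem reduces to the bound
\begin{align*}
\sup_{x\in\underline{\mathcal{X}}}\bigl|\sigma_u(x)\mathbb{G}_n(\tilde\psi_{u,x})+\sigma_l(x)\mathbb{G}_n(\tilde\psi_{l,x})\bigr| = o_{P_F}(1).
\end{align*}

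The key structural step is to show that on $\underline{\mathcal{X}}$ the two influence functions nearly coincide. The natural route is to argue that $\psi_{u,n}(\cdot;x)-\psi_{l,n}(\cdot;x)$ is small in $L^2(F)$ whenever $\Delta_n(x)\le k_n$, for instance because both bounds are smooth functionals sharing a common limit as the gap closes. Then $\sigma_u(x)\tilde\psi_{u,x}+\sigma_l(x)\tilde\psi_{l,x}=-(\psi_{u,n}-\psi_{l,n})(\cdot;x)$ has variance $\sup_{x\in\underline{\mathcal X}}\tau_n^2(x)\to0$. One then applies the maximal inequality for VC-type classes (Corollary 5.1 in Chernozhukov et al.'s 2014 Gaussian approximation paper, as used inside \cite{chernozhukov2016empirical}). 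This is legitimate because the difference class is VC-type with envelope $2\bar\Psi\max\sigma$ by \cref{as:assumption4}(ii) and the boundedness of $\sigma_t$, and it gives
\begin{align*}
\mathbb{E}_F\sup_{x\in\underline{\mathcal X}}|\mathbb{G}_n(\cdot)|\lesssim \tau_n\sqrt{K_n(F)}+\frac{b(F)K_n(F)}{n^{1/2-1/q}},
\end{align*}
where $\tau_n\coloneqq\sup_{x\in\underline{\mathcal X}}\tau_n(x)$. The second term vanishes uniformly in $F$ by \cref{as:assumption5}(ii), since it is dominated by $\delta_n^{(1)}(F)$. The first term vanishes provided $k_n$ is chosen so that $\tau_n\sqrt{K_n}\to0$. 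Markov's inequality then gives the $o_{P_F}(1)$ claim uniformly in $F$.

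The final step is choosing $k_n$. The requirements are $k_n\to0$ and $k_n r_n^{1/2}\to\infty$, e.g.\ $k_n=r_n^{-1/4}$, together with the condition that $\sup_{\Delta_n(x)\le k_n}\tau_n(x)$ decay faster than $K_n^{-1/2}$. Because $K_n^3\le n$ by \cref{as:assumption4}(iv), any polynomial rate suffices.

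I expect the main obstacle to be exactly this link between a small identified-set width $\Delta_n(x)$ and a small $L^2$ distance between $\psi_{u,n}(\cdot;x)$ and $\psi_{l,n}(\cdot;x)$. Nothing in \cref{as:assumption1}, \ref{as:assumption4} or \ref{as:assumption5} states this directly. If it cannot be extracted, I would fall back on the implicit structure that makes $\hat\Delta_n$ superefficient near point identification, as in Lemma 3 of \cite{Stoye:2009}. That lemma uses $\hat\theta_u\ge\hat\theta_l$ (\cref{as:assumption1}(iii)) together with joint asymptotic normality with variance bounded below. Its uniform analogue would be obtained by the same anti-concentration argument (Lemma 2.1 in \cite{chernozhukov2016empirical}) applied to the Gaussian coupling of $\sup_x$.
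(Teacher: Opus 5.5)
The gap is the step you flag yourself, and it is the entire content of the lemma. You must show $\sup_{F\in\mathcal{F}}\sup_{x\in\underline{\mathcal{X}}_{k_n}(F)}\tau_n(x;F)\to0$, where $\tau_n^2(x;F)=\Var_F\{\psi_{u,n}(X;x)-\psi_{l,n}(X;x)\}$, and the proposal does not do so. Your primary route, that both bounds are smooth functionals with a common limit as the gap closes, has no basis in the assumptions. The only hypothesis linking a small $\Delta_n(x)$ to the influence functions is \cref{as:assumption1}(iii). Your fallback is the right idea and is what the paper does. But it is not an anti-concentration argument for a supremum; it is a pointwise argument by contradiction. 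If the claim fails, there are $\eta>0$ and $(n_m,F_m,x_m)$ with $\Delta_{n_m}(x_m;F_m)\le k_{n_m}$ and $\tau_{n_m}(x_m;F_m)\ge\eta$. A Berry--Esseen bound for the scalar $r_n^{1/2}\{\hat\Delta_n(x)-\Delta_n(x)\}$, uniform in $(F,x)$ with error $\varepsilon_n\to0$, then gives $P_{F_m}\{\hat\Delta_{n_m}(x_m)<0\}\ge\Phi(-r_{n_m}^{1/2}k_{n_m}/\eta)-\varepsilon_{n_m}$. Once this is positive it contradicts \cref{as:assumption1}(iii). The tool is a lower bound on a one-dimensional tail probability, not an upper bound on the probability that a supremum lies in a thin band.

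This argument dictates $k_n$, and your choice fails. You need $r_n^{1/2}k_n\to\infty$ so slowly that $\Phi(-\gamma r_n^{1/2}k_n)\gg\varepsilon_n$ for every fixed $\gamma>0$; the paper takes $k_n=-\delta_n/(2r_n^{1/2})$ for such a $\delta_n\to-\infty$. With $k_n=r_n^{-1/4}$ the bound becomes $\Phi(-r_n^{1/4}/\eta)-\varepsilon_n$, which is eventually negative for any polynomial $\varepsilon_n$. The conclusion itself can then fail. Take $r_n=n$, $D_i=Z_i+n^{-1/4}$ with $Z_i$ i.i.d.\ $N(0,1)$, and $\hat\Delta_n=\max\{\bar D_n,0\}$. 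This estimator differs from $\bar D_n$ only with probability $\Phi(-n^{1/4})$, so it satisfies the linearization in \cref{as:assumption1}(i) and the ordering in \cref{as:assumption1}(iii). Yet $n^{1/2}|\hat\Delta_n-\Delta_n|$ is asymptotically $|N(0,1)|$ at a point with $\Delta_n=n^{-1/4}=k_n$. Moreover, the contradiction delivers $\tau_n\to0$ with no rate. Your requirement $\tau_n\sqrt{K_n(F)}\to0$, with $K_n(F)\ge v\log n\to\infty$, therefore cannot be verified, and the remark that any polynomial rate suffices presupposes a rate you do not have. This part is repairable. The maximal inequality you invoke, in its sharp form, has leading term of order $\tau_n\{v\log(A\|\bar\Psi\|_{L^2(F)}/\tau_n)\}^{1/2}$ (floor $\tau_n$ at, say, $n^{-1}$), which vanishes whenever $\tau_n\to0$. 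With that change and the paper's $k_n$, your second step is a legitimate and more direct substitute for the paper's. The paper couples $\sup_{x\in\underline{\mathcal{X}}}r_n^{1/2}|\hat\Delta_n(x)-\Delta_n(x)|$ with $\sup_x|\tilde\sigma_\Delta(x)G(x)|$ and applies Markov's inequality; your version needs neither a Gaussian coupling nor anti-concentration.
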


\begin{proof}
In the following proof, we denote ${\sigma}^2_{l}(x) \coloneqq r_n \tilde{\sigma}^2_l(x)$, where $r_n$ is a convergence rate of some estimator of $\theta_l(x)$ and $\tilde{\sigma}_l(x)$ is an appropriate constant.

Fisrt, we will show $\sup_{F\in\mathcal{F}}\sup_{x\in\underline{\mathcal{X}}_n(F)} \tilde{\sigma}_\Delta(x;F) \to 0$ by contradiction, where $\underline{\mathcal{X}}_{k_n}(F) \coloneqq \{x\in\mathcal{X}: \Delta_n(x;F) \le k_n\}$.
The sequence $k_n$ is chosen below so that $k_n \to 0$, $\sqrt{r}_nk_n\to\infty$.
The proof of $\sup_{F\in\mathcal{F}}\sup_{x\in\underline{\mathcal{X}}_{k_n}(F)} \tilde{\sigma}_\Delta(x;F) \to 0$ is almost same as that of conterpart result in \cite{Stoye:2009}. 
We assume $\sup_{F\in\mathcal{F}}\sup_{x\in\underline{\mathcal{X}}_{k_n}(F)} \tilde{\sigma}_\Delta(x;F) $ does not converge to $0$. 
This implies that there exist $\eta>0$, a subsequence $\{n_m\}_{m\ge 1}$, distributions $F_m\in\mathcal{F}$ and points $x_m\in \underline{\mathcal{X}}_{k_{n_m}}(F)$ such that $\tilde{\sigma}_\Delta(x_m;F_m) \ge \eta$ for all $m \ge 1$.
It is sufficient to show that $P_{F_m}[\hat{\Delta}_{n_m}(x_m) < 0] > 0$ holds  for all sufficiently large $m$. 
Indeed, this implies $P_{F_m}[\hat{\theta}_{u,n_m}(x_m) < \hat{\theta}_{l,n_m}(x_m)]>0$, which contradicts \cref{as:assumption1}(iii).
\cref{as:assumption1} and the standard Berry-Esseen's theorem give
\begin{align}
    \sup_{F\in\mathcal{F},x\in\mathcal{X},t\in\mathbb{R}} \left| P_F\left( r_n^{1/2}\{\hat{\Delta}_n(x) - \Delta_n(x)\} \le t\right)  - \Phi\left(\frac{t}{\tilde{\sigma}_{\Delta}(x;F)}\right) \right| \le \varepsilon_n,\label{eq:app_point_dist} 
\end{align}
with $\varepsilon_n\to 0$.
As stated in the proof of Lemma 3 in \cite{Stoye:2009}, we can fix a non-positive sequence $\delta_n \to -\infty$ such that $-\delta_m/\sqrt{r_n} \to 0$ and $\Phi(\gamma\delta_n) \gg \varepsilon_n$  for any fixed $\gamma>0$.
Choose $k_n = -\delta_n / (2\sqrt{r_n})$, then $k_n \to 0$ and $\sqrt{r_n}k_n = -\delta_n/2 \to \infty$.
Then, as proof of Lemma 3 in \cite{Stoye:2009}, from \eqref{eq:app_point_dist}, we have 
\begin{align*}
    &P_{F_m}[\hat{\theta}_{u,n_m}(x_m) < \hat{\theta}_{l,n_m}(x_m)]\\
    & = P_{F_m}[r_{n_m}^{1/2}\{\hat{\Delta}_{n_m}(x_m) - \Delta_{n_m}(x_m)\} < -r_{n_m}^{1/2} \Delta_{n_m}(x_m)] \\
    & \ge \Phi(-r_{n_m}^{1/2}\Delta_{n_m}(x_m)/\tilde{\sigma}_{\Delta}(x_m;F_m)) - \varepsilon_{n_m} \\
    & \ge \Phi(\delta_{n_m}/\tilde{\sigma}_{\Delta}(x_m;F_m)) - \varepsilon_{n_m},
\end{align*}
which is strictly positive for $n$ large enough. 
Moreover, since $\tilde{\sigma}_\Delta(x_m;F_m) \ge \eta$ and $\delta_{n_m} < 0$, it follows that $P_{F_m}[\hat{\theta}_{u,n_m}(x_m) < \hat{\theta}_{l,n_m}(x_m)] \ge \Phi\left( \delta_{n_m}/\eta\right) - \varepsilon_{n_m}$.
By the choice of $\delta_n$ and $\varepsilon_n$, $\Phi\left( \delta_{n_m}/\eta\right) \gg \varepsilon_{n_m}$ for all sufficiently large $m$.
Therefore
\begin{align*}
    P_{F_{m}}[\hat{\theta}_{u,n_m}(x_m) < \hat{\theta}_{l,n_m}(x_m)]>0.
\end{align*}
This contradicts \cref{as:assumption1}(iii), thus $\sup_{F\in\mathcal{F}}\sup_{x\in\underline{\mathcal{X}}_{k_n}(F)} \tilde{\sigma}_\Delta(x;F) \to 0$.

By applying the similar uniform Gaussian approximation argument as in the proof of \cref{lem:uniform_Gauss}
to the empirical process indexed by $\{\pm(\psi_{u,n}(\cdot;x)-\psi_{l,n}(\cdot;x)) : x\in\underline{\mathcal{X}}_{k_n}(F)\}$, 
we obtain
\begin{align}
    \sup_{F\in\mathcal{F},t\in\mathbb{R}} \left| P_F\left( \sup_{x\in\underline{\mathcal{X}}} r_n^{1/2}|\hat{\Delta}_n(x) - \Delta_n(x)| \le t\right)  - P_F\left(\sup_{x\in\underline{\mathcal{X}}} |\tilde{\sigma}_{\Delta}(x)G(x)| \le t \right)\right| \le \varepsilon_n, \label{eq:app_uni_dist}
\end{align}
with a sequence $\varepsilon_n \to 0$ and the standard Gaussian process $G(x)$. 
Since we have shown \\$\sup_{F\in\mathcal{F}}\sup_{x\in\underline{\mathcal{X}}_{k_n}(F)} \tilde{\sigma}_\Delta(x;F) \to 0$, in conjunction with \eqref{eq:app_uni_dist}, it holds that
\begin{align*}
    & \sup_{F\in \mathcal{F} } P_F\left( \sup_{x\in\underline{\mathcal{X}}_{k_n}(F)}  r_n^{1/2}\left| \hat{\Delta}_n(x) - \Delta_n(x) \right| > t\right) \\
    & \le \sup_{F\in \mathcal{F}} P_F\left( \sup_{x\in\underline{\mathcal{X}}_{k_n}(F)}  \tilde{\sigma}_\Delta(x)G(x) > t \right) + \varepsilon_n \\
    & \le \sup_{F\in \mathcal{F}} P_F\left( \sup_{x\in\underline{\mathcal{X}}_{k_n}(F)}  \left|\tilde{\sigma}_\Delta(x)\right| \sup_{x\in\underline{\mathcal{X}}_{k_n}(F)}\left|G(x)\right| > t\right) + \varepsilon_n \\
    & = \sup_{F\in \mathcal{F} } P_F\left(  \sup_{x\in\underline{\mathcal{X}}_{k_n}(F)}\left|G(x)\right| > \frac{t}{\sup_{x\in\underline{\mathcal{X}}_{k_n}(F)}  \left|\tilde{\sigma}_\Delta(x)\right|} \right) + \varepsilon_n \\
    & \le \frac{\sup_{F\in \mathcal{F} }\sup_{x\in\underline{\mathcal{X}}_{k_n}(F)}|\tilde{\sigma}_{\Delta}(x)|}{t} \cdot \sup_{F\in \mathcal{F} }  \mathbb{E}_{F}\left[ \sup_{x\in\underline{\mathcal{X}}_{k_n}(F)}\left|G(x)\right| \right] + \varepsilon_n.
\end{align*}
Under \cref{as:assumption4} and \cref{as:assumption5}, it holds that $\mathbb{E}_{F}\left[ \sup_{x\in\underline{\mathcal{X}}_{k_n}(F)}\left|G(x)\right| \right] < \infty$ uniformly over $\mathcal{F}$.
Therefore, $\sup_{F\in\mathcal{F}}\sup_{x\in\underline{\mathcal{X}}_{k_n}(F)}|\tilde{\sigma}_{\Delta}(x)|\to 0$ and $\varepsilon_n\to 0$ imply $\sup_{x\in\underline{\mathcal{X}}_{k_n}(F)}r_n^{1/2}|\hat{\Delta}_n(x) - \Delta_n(x)| \xrightarrow{p} 0$ uniformly in $\mathcal{F}$.

\end{proof}

\begin{proof}[Proof of \cref{thm:gaussian_approximation}]
The proof here generalize the results in \cite{Stoye:2009} and \cite{Frandsen_Pond:2025} to infinite-dimensional setting.
The proof proceeds in line with that of \cite{Frandsen_Pond:2025}, though technical details are different.

First, note that $P_F\left(\theta(x) \in \hat{\CI}^\fim_n(x)\right)$ is a sequence of real numbers. 
By the definition of $\lim \inf$, there exists a subsequence $\{n_m\}$ such that,
\begin{align*}
    & \liminf_{n\to\infty}\inf_{F\in\mathcal{F}}  \inf_{\theta\in\Theta_n(F)} P_F\left(\theta(x) \in \hat{\CI}^\fim_n(x) \text{ for all } x\in\mathcal{X} \right) \\
    & = \lim_{m\to\infty} \inf_{F\in\mathcal{F}}  \inf_{\theta\in\Theta_{n_m}(F)}  P_F\left(\theta(x) \in \hat{\CI}^\fim_{n_m}(x) \text{ for all } x\in\mathcal{X} \right) .
\end{align*}
We work with subsequence $\{n_m\}$. 
Note that the coverage probability of $\hat{\CI}^\fim_{n_m}(x)$ is given by 
\begin{align*}
    P_F\left(\theta(x) \in \hat{\CI}^\fim_{n_m}(x) \text{ for all } x\in\mathcal{X} \right) = P_F\left( \hat{A}_{n_m}(x) \cap \hat{B}_{n_m}(x) \text{ for all } x\in\mathcal{X} \right),
\end{align*}
where we define the events
\begin{align*}
    & \hat{A}_{n_m}(x) \coloneqq \left\{ -\frac{\hat{\sigma}_{u,n_m}(x)}{\sigma_u(x)} \hat{c}^{\fim}_{n_m} - r_{n_m}^{1/2}\frac{\theta_{u,n_m}(x) - \theta(x)}{\sigma_{u}(x)} \le r_{n_m}^{1/2} \frac{\hat{\theta}_{u,n_m}(x) - \theta_{u,n_m}(x)}{\sigma_u(x)}\right\}, \\
    & \hat{B}_{n_m}(x) \coloneqq \left\{ r_{n_m}^{1/2} \frac{\hat{\theta}_{l,n_m}(x) - \theta_{l,n_m}(x)}{\sigma_l(x)} \le \frac{\hat{\sigma}_{l,n_m}(x)}{\sigma_{l}(x)} \hat{c}_{n_m}^\fim + r_{n_m}^{1/2} \frac{\theta(x) - \theta_{l,n_m}(x)}{\sigma_l(x)} \right\}.
\end{align*}

\paragraph{Case 1}
We first consider the case where $\Delta_{n_m}(x) \le k_n$ for all $x\in\mathcal{X}$, where $k_n$ is the sequence in Assumption \ref{as:assumption2}. 
From \cref{as:assumption1}, we have
\begin{align}
    \sup_{(t,x) \in \{u,l\} \times \mathcal{X}} \left| \frac{\hat{\sigma}_{t,n_m}(x)}{\sigma_t(x)} - 1 \right| = o_{P_F}(1). \label{eq:variance_consistency}
\end{align}
Also, from \cref{as:assumption3}, we have $|\hat{c}_{n_m} - c_{n_m}| = o_{P_F}(1)$.
Then, we can see that
\begin{align*}
    & \inf_{F\in\mathcal{F}}  \inf_{\theta\in\Theta_{n_m}(F)} P_F\left( \theta(x) \in \hat{\CI}^\fim_{n_m}(x)  \text{ for all } x\in\mathcal{X}   \right) \\
    & = \inf_{F\in\mathcal{F}}  \inf_{\theta\in\Theta_{n_m}(F)} P_F\left( \hat{A}_{n_m}(x) \cap \hat{B}_{n_m}(x) \text{ for all } x\in\mathcal{X} \right) \\
    & =\inf_{F\in\mathcal{F}}  \inf_{\theta\in\Theta_{n_m}(F)} P_F\left( \tilde{A}_{n_m}(x) \cap \tilde{B}_{n_m}(x) \text{ for all } x\in\mathcal{X} \right) + o(1) \\
    & =  \inf_{F\in\mathcal{F}}  \inf_{\theta\in\Theta_{n_m}(F)} P_F \left[ \sup_{(t,x)\in\{u,l\}\times\mathcal{X}} (G(\tilde\psi_{t,x}) - B_n(t,x)) \le c_{n_m}^\fim \right] + o(1), 
\end{align*}
where 
the first equality follows from the definition of $\CIfim_{n_m} $, $\tilde{A}_{n_m}(x)$ and $\tilde{B}_{n_m}(x)$, 
the second approximation follows from \eqref{eq:variance_consistency} and \cref{as:assumption3}, and
the final approximation follows from the unifrom Gaussian approximation (\cref{lem:uniform_Gauss}).

For the subsequent discussion, note that $B_n(t,x)$ is function with respect to $v$,
\begin{align*}
    B_n(t,x) = s_t \begin{bmatrix}
        r_{n_m}^{1/2}\frac{(1-v(x))\Delta_{n_m}(x)}{\sigma_{u}(x)} , r_{n_m}^{1/2} \frac{v(x) \Delta_{n_m}(x)}{\sigma_l(x)} 
    \end{bmatrix}^\top
    \eqqcolon B_n(t,x;v),
\end{align*}
and, since $\Delta_{n_m}(x)v(x) = \theta(x) -  \theta_{l,n_m}(x)$ and $\Delta_{n_m}(x)\{1-v(x)\} = \theta_{u,n_m}(x) - \theta(x)$, we can also regard $B_n(t,x;v)$ as the function  with respect to $\theta$
\begin{align*}
    B_n(t,x) = s_t \begin{bmatrix}
        r_{n_m}^{1/2}\frac{\theta_{u,n_m}(x) - \theta(x)}{\sigma_{u}(x)} , r_{n_m}^{1/2} \frac{\theta(x) -  \theta_{l,n_m}(x)}{\sigma_l(x)} 
    \end{bmatrix}^\top \eqqcolon B_{n_m}(t,x;\theta).
\end{align*}
Therefore, we can see that
\begin{align*}
    &  \inf_{F\in\mathcal{F}}  \inf_{\theta\in\Theta_{n_m}(F)} P_F \left[ \sup_{(t,x)\in\{u,l\}\times\mathcal{X}} (G(\tilde\psi_{t,x}) - B_{n_m}(t,x)) \le c_{n_m}^\fim \right]  \\
    & = \inf_{F\in\mathcal{F}}  \inf_{\theta\in\Theta_{n_m}(F)} P_F \left[ \sup_{(t,x)\in\{u,l\}\times\mathcal{X}} (G(\tilde\psi_{t,x}) - B_{n_m}(t,x;\theta)) \le c_{n_m}^\fim \right] .
\end{align*}
Since  $v(x) \mapsto \theta(x) = \theta_{l,n} + v(x)\Delta_n(x)$ is surjective over $\mathcal{X}\times \{l,u\}$,
\begin{align*}
    &  \inf_{F\in\mathcal{F}}  \inf_{\theta\in\Theta_{n_m}(F)} P_F \left[ \sup_{(t,x)\in\{u,l\}\times\mathcal{X}} (G(\tilde\psi_{t,x}) - B_{n_m}(t,x;\theta)) \le c_{n_m}^\fim \right] \\
    & = \inf_{F\in\mathcal{F}}  \inf_{v\in[0,1]^\mathcal{X}}P_F \left[ \sup_{(t,x)\in\{u,l\}\times\mathcal{X}} (G(\tilde\psi_{t,x}) - B_{n_m}(t,x;v)) \le c_{n_m}^\fim \right] \\
    & = \inf_{F\in\mathcal{F}} \inf_{v\in[0,1]^{\mathcal{X}}} P_F \left[  G(\tilde\psi_{t,x}) - B_{n_m}(t,x;v) \le c_{n_m}^\fim  \text{ for all } (t,x)\in \{u,l\}\times\mathcal{X} \right].
\end{align*}
Using log-concavity of the Gaussian density (cf. pp.1310 in \citealp{Stoye:2009} and pp.8 in \citealp{Frandsen_Pond:2025}) in conjunction with \cref{as:assumption4}(i) gives
\begin{align*}
    & \inf_{v\in[0,1]^{\mathcal{X}}} P_F \left[  G(\tilde\psi_{t,x}) - B_{n_m}(t,x;v) \le c_{n_m}^\fim  \text{ for all } (t,x)\in \{u,l\}\times\mathcal{X} \right] \\
    & = \inf_{v\in\{0,1\}^{\mathcal{X}}} P_F \left[  G(\tilde\psi_{t,x}) - B_{n_m}(t,x;v) \le c_{n_m}^\fim  \text{ for all } (t,x)\in \{u,l\}\times\mathcal{X} \right].
\end{align*}
Summing up, 
\begin{align*}
    & \lim_{m\to\infty}  \inf_{F\in\mathcal{F}}  \inf_{\theta\in\Theta_{n_m}(F)} P_F\left(\theta(x) \in \hat{\CI}^\fim_{n_m}(x) \text{ for all } x\in\mathcal{X} \right) \\
    &=\lim_{m\to\infty} \inf_{F\in\mathcal{F}}\inf_{v\in\{0,1\}^\mathcal{X}} P_F\left[ \sup_{(t,x)\in\{u,l\}\times\mathcal{X}} (G(\tilde\psi_{t,x}) - B_{n_m}(t,x;v)) \le c_{n_m}^\fim \right] = 1-\alpha.
\end{align*}

\paragraph{Case 2}
Next, for each $m\ge 1$ and $F\in\mathcal{F}$, define
\begin{align*}
    D_m(F) \coloneqq \{x\in\mathcal{X} : \Delta_{n_m}(x;F) > k_{n_m}\}, ~~ D_m^c(F) \coloneqq \mathcal{X}\setminus D_m(F).
\end{align*}
The proof of this case is almost same as that of \cite{Frandsen_Pond:2025}.
Since \cref{as:assumption2} assume that $k_{n_m}\sqrt{r}_{n_m} \to \infty$, it holds that 
\begin{align*}
    \inf_{x\in D_m(F)} \sqrt{r_{n_m}} \Delta_{n_m}(x) > \inf_{x\in D_m(F)} \sqrt{r_m}k_{n_m} \to \infty
\end{align*}
Also, for any $x\in D_m(F)$,  either
\begin{align*}
     \sqrt{r_{n_m}}\left( \theta(x) - \theta_{l,n_m}(x) \right) \to \infty, \text{ or } \sqrt{r_{n_m}} \left( \theta_{u,n_m}(x) - \theta(x) \right) \to \infty, \text{ or both.}
\end{align*}
If both diverge, then both $ \hat{A}_{n_m}(x) \cap \hat{B}_{n_m}(x)$ ouccurs with probability approaching to $1$.
If 
\begin{align*}
     \sqrt{r_{n_m}}\left( \theta(x) - \theta_{l,n_m}(x) \right) \to \infty, \text{ and } \sqrt{r_{n_m}} \left( \theta_{u,n_m}(x) - \theta(x) \right) < \infty,
\end{align*}
then $\hat{B}_{n_m}(x)$ occurs with probability approaching to $1$, and so convergence at the point $x$ is determined by the event $\hat{A}_{n_m}(x)$, and
\begin{align*}
    \hat{A}_{n_m}(x) \supseteq \left\{ -\frac{\hat{\sigma}_{u,n_m}(x)}{\sigma_u(x)} \hat{c}^{\fim}_{n_m} \le r_{n_m}^{1/2} \frac{\hat{\theta}_{u,n_m}(x) - \theta_{u,n_m}(x)}{\sigma_u(x)} \le \hat{c}_{n_m}^\fim + r_{n_m}^{1/2}\frac{\Delta_{n_m}(x)}{\sigma_{l}(x)}  \right\},
\end{align*}
and if
\begin{align*}
     \sqrt{r_{n_m}}\left( \theta(x) - \theta_{l,n_m}(x) \right) < \infty, \text{ and } \sqrt{r_{n_m}} \left( \theta_{u,n_m}(x) - \theta(x) \right) \to \infty,
\end{align*}
then $\hat{A}_{n_m}(x)$ occurs with probability approaching to $1$, and so convergence at the point $x$ is determined by the event $\hat{B}_{n_m}(x)$, and
\begin{align*}
    \hat{B}_{n_m}(x) \supseteq \left\{ -\hat{c}_{n_m}^\fim - r_{n_m}^{1/2}\frac{\hat{\Delta}_{n_m}(x)}{\sigma_u(x)} \le r_{n_m}^{1/2} \frac{\hat{\theta}_{l,n_m}(x) - \theta_{l,n_m}(x)}{\sigma_l(x)}  \le \frac{\hat{\sigma}_{l,n_m}(x)}{\sigma_l(x)} \hat{c}_{n_m}^\fim \right\}.
\end{align*}
Combining these, the coverage event for the point $x$ therefore includes with probability approaching to $1$ the following event, parametrized by $v(x)\in\{0,1\}$
\begin{align*}
    \hat{D}_{n_m}(x;v) \coloneqq \left\{
    \begin{array}{ll}
        \left\{ - \hat{c}_{n_m}^\fim - r_{n_m}^{1/2} \frac{\hat{\Delta}_{n_m}(x)}{\sigma_u(x)} \le r_{n_m}^{1/2} \frac{\hat{\theta}_{l,n_m}(x) - \theta_{l,n_m}(x)}{\sigma_l(x)}  \le \frac{\hat{\sigma}_{l,n_m}(x)}{\sigma_l(x)} \hat{c}_{n_m}^\fim \right\}, & v(x)=0 \\
        \left\{-\frac{\hat{\sigma}_{u,n_m}(x)}{\sigma_u(x)} \hat{c}^{\fim}_{n_m} \le r_{n_m}^{1/2} \frac{\hat{\theta}_{u,n_m}(x) - \theta_{u,n_m}(x)}{\sigma_u(x)} \le \hat{c}_{n_m}^\fim + r_{n_m}^{1/2}\frac{\Delta_{n_m}(x)}{\sigma_{l}(x)} \right\}, & v(x)=1 
    \end{array}\right. ~~~,
\end{align*}
where the parameter $v(x)$ corresponds to whether $\sqrt{r_{n_m}}(\theta(x) - \theta_{l,n_m}(x))$ or $\sqrt{r_{n_m}} \left( \theta_{u,n_m}(x) - \theta(x) \right)$ diverges.
Therefore, we have 
\begin{align*}
    & P_F\left(\theta(x) \in \hat{\CI}_{n_m}^\fim(x)\right) \\
    & \ge P_F\left( \left\{ \hat{A}_{n_m}(x) \cap \hat{B}_{n_m}(x) \text{ for all } x\in D^c_m(F) \right\} \cap \left\{\hat{D}_{n_m}(x;v) \text{ for all } x\in D_m(F) \right\} \right) + o(1).
\end{align*}
Define a counterpart to $\hat{D}_{n_m}(x;v)$ as
\begin{align*}
    \tilde{D}_{n_m}(x;v)\coloneqq \left\{
    \begin{array}{ll}
        \left\{-c_{n_m}^\fim - r_{n_m}^{1/2} \frac{{\Delta}_{n_m}(x)}{\sigma_u(x)} \le r_{n_m}^{1/2} \frac{\hat{\theta}_{l,n_m}(x) - \theta_{l,n_m}(x)}{\sigma_l(x)}  \le {c}_{n_m}^\fim \right\}, & v(x)=0 \\
        \left\{-{c}^{\fim}_{n_m} \le r_{n_m}^{1/2} \frac{\hat{\theta}_{u,n_m}(x) - \theta_{u,n_m}(x)}{\sigma_u(x)} \le {c}_{n_m}^\fim + r_{n_m}^{1/2}\frac{\Delta_{n_m}(x)}{\sigma_{l}(x)} \right\}, & v(x)=1 
    \end{array}\right. ~~~,
\end{align*}
From  \eqref{eq:variance_consistency}, $|\hat{c}_{n_m} - c_{n_m}| = o_p(1)$ (\cref{as:assumption3}) and the fact that
\begin{align*}
    \sqrt{r_{n_m}} \Delta_{n_m}(x) > \sqrt{r_{n_m}}k_{n_m} \to \infty \text{ on } D_m(F),
\end{align*}
we have
\begin{align*}
    & P_F\left( \left\{ \hat{A}_{n_m}(x) \cap \hat{B}_{n_m}(x) \text{ for all } x\in D^c_m(F) \right\} \cap \left\{\hat{D}_{n_m}(x;v) \text{ for all } x\in D_m(F) \right\} \right) \\
    & = P_F\left( \left\{ \tilde{A}_{n_m}(x) \cap \tilde{B}_{n_m}(x) \text{ for all } x\in D^c_m(F) \right\} \cap \left\{\tilde{D}_{n_m}(x;v) \text{ for all } x\in D_m(F) \right\} \right) + o(1).
\end{align*}
Define
\begin{align*}
    & I_1(v) \coloneqq \left\{ (t,x) \in \{u,l\}\times D_m^c(F) \mid v(x)\in[0,1]\right\}, \\
    & I_2(v) \coloneqq \{(t,x) \in \{l\} \times D_m(F) \mid v(x) =0\}, \quad  I_3(v) \coloneqq \{(t,x) \in \{u\} \times D_m(F) \mid v(x) =1 \},
\end{align*}
and $I(v) \coloneqq I_1(v) \cup I_2(v) \cup I_3(v)$, then in the same way as the proof of \cref{lem:uniform_Gauss}, we have
\begin{align*}
    & P_F\left( \left\{ \tilde{A}_{n_m}(x) \cap \tilde{B}_{n_m}(x) \text{ for all } x\in D^c_m(F) \right\} \cap \left\{\tilde{D}_{n_m}(x;v) \text{ for all } x\in D_m(F) \right\} \right) \\
    & = P_F\left[ \sup_{(t,x)\in I(v)} \left(  G_{n_m}(t,x) - B_{n_m}(t,x;v)\right) \le c_{n_m}^\fim \right] + o(1),
\end{align*}
then under \cref{as:assumption1}, \cref{as:assumption4} and \cref{as:assumption5}, also in the same way as the proof of \cref{lem:uniform_Gauss}, we can see that
\begin{align*}
    & \sup_{F\in\mathcal{F}} \left|  P_F\left[ \sup_{(t,x)\in I(v)} \left(  G_{n_m}(t,x) - B_{n_m}(t,x;v)\right) \le c_{n_m}^\fim\right] \right. \\
    & \left. \quad\quad\quad\quad\quad - P_F\left[ \sup_{(t,x)\in I(v)} \left(  G(t,x) - B_{n_m}(t,x;v)\right) \le c_{n_m}^\fim\right] \right| = o(1).
\end{align*}
Since $I(v)\subseteq \{u,l\}\times\mathcal{X}$, we have
\begin{align*}
    & P_F\left[ \sup_{(t,x)\in I(v)} \left(  G(t,x) - B_{n_m}(t,x;v)\right) \le c_{n_m}^\fim\right]  \\
    & \ge P_F\left[ \sup_{(t,x)\in \{u,l\}\times\mathcal{X}} \left(  G(t,x) - B_{n_m}(t,x;v)\right) \le c_{n_m}^\fim\right].
\end{align*}
Therefore
\begin{align*}
     & P_F\left(\theta(x) \in \hat{\CI}_{n_m}^\fim(x)\right) \ge P_F\left[ \sup_{(t,x)\in \{u,l\}\times\mathcal{X}} \left(  G(t,x) - B_{n_m}(t,x;v)\right) \le c_{n_m}^\fim\right] + o(1).
\end{align*}
Taking the infmum over $F\in\mathcal{F}$ and $\theta\in \Theta_{n_m}(F)$ and similar argument in Case 1 give
\begin{align*}
    & \inf_{F\in\mathcal{F}} \inf_{\theta\in \Theta_{n_m}(F) } P_F\left(\theta(x) \in \hat{\CI}_{n_m}^\fim(x)\right) \\
    & \ge \inf_{F\in\mathcal{F}} \inf_{v\in\{0,1\}^\mathcal{X}}  P_F\left[ \sup_{(t,x)\in \{u,l\}\times\mathcal{X}} \left(  G(t,x) - B_{n_m}(t,x;v)\right) \le c_{n_m}^\fim\right] + o(1)
\end{align*}
Therefore
\begin{align*}
    & \liminf_{m\to\infty}\inf_{F\in\mathcal{F}} \inf_{\theta\in \Theta_{n_m}(F) } P_F\left(\theta(x) \in \hat{\CI}_{n_m}^\fim(x)\right) \\
    & \ge \liminf_{m\to\infty}\inf_{F\in\mathcal{F}} \inf_{v\in\{0,1\}^\mathcal{X}}  P_F\left[ \sup_{(t,x)\in \{u,l\}\times\mathcal{X}} \left(  G(t,x) - B_{n_m}(t,x;v)\right) \le c_{n_m}^\fim\right] + o(1)\\
    & = 1-\alpha.
\end{align*}
Thus, the same argument in  the last paragraph of the proof of Proposition 1 in \cite{Stoye:2009} gives
\begin{align*} 
    \lim_{m\to\infty}  \inf_{F\in\mathcal{F}}  \inf_{\theta_{n_m}\in\Theta_{n_m}(F)} P_F\left( \theta_{n_m}(x)\in \hat{\CI}_{n_m}^\fim(x)  \text{ for all } x\in\mathcal{X} \right) = 1-\alpha.
\end{align*}
\end{proof}

\subsection{Proofs for \cref{subsec:implementation}}
\begin{proof}[Proof of \cref{thm:vaild_mb}]
First, from \cref{as:assumption6}(i), it holds that
\begin{align}
    & \sup_{F\in\mathcal{F}} P_F\Bigg(  \Bigg|\sup_{(t,x)\in\{u,l\}\times\mathcal{X}} \left( \mathbb{G}_n^{\star,b}(\tilde{\psi}^{\text{s.a.}}_{t,x}) - \hat{B}_n(t,x;v) \right) \nonumber\\
    & \qquad\qquad- \sup_{(t,x)\in\{u,l\}\times\mathcal{X}} \left( \mathbb{G}_n^{\star,b}(\tilde{\psi}_{t,x}) - {B}_n(t,x;v) \right) \Bigg| \ge \varepsilon \Bigg) = o(1), \label{eq:sa-consistency}
\end{align}
for all $\varepsilon>0$.
Also, for each $F\in\mathcal{F}$ and every $\gamma_n\in(0,1)$,  under \cref{as:assumption4} and \cref{as:assumption6}(iii), Theorem 2.2 in \cite{chernozhukov2016empirical} gives
\begin{align*}
    P_F\left(\left|\sup_{\tilde\psi_{t,x}\in\tilde\Psi} \left( \mathbb{G}_n^{\star,b}(\tilde{\psi}_{t,x}) - B(\tilde\psi_{t,x}) \right) - \sup_{\tilde\psi_{t,x}\in\tilde\Psi}(G(\tilde\psi_{t,x}) - B(\tilde\psi_{t,x})) \right| > r_1^\star(F)\right) \le r_2^\star.
\end{align*}
with $r_2^\star \coloneqq C_4(\gamma_n + n^{-1})$,  with a slight abuse of notation, the Gaussian supremum is understood as a coupled copy having the same distribution as $\sup_{\tilde\psi_{t,x}\in\tilde\Psi}(G(\tilde\psi_{t,x}) - B(\tilde\psi_{t,x}))$. 
Therefore, by Markov’s inequality, for any sequence $\rho_n \to 0$ satisfying $r_2^\star/\rho_n \to 0$, we have
\begin{align*}
    P^\star\left(  \left|\sup_{\tilde\psi_{t,x}\in\tilde\Psi} \left( \mathbb{G}_n^{\star,b}(\tilde{\psi}_{t,x}) - B(\tilde\psi_{t,x}) \right) - \sup_{\tilde\psi_{t,x}\in\tilde\Psi}(G(\tilde\psi_{t,x}) - B(\tilde\psi_{t,x})) \right| > r_1^\star(F) \right) \le \rho_n,
\end{align*}
with probability approaching one uniformly over $F\in\mathcal{F}$, where $P^\star$ is probability conditional on observations.
As in the case of the proof of \cref{lem:uniform_Gauss}, Lemma 2.1 and Lemma 2.2 in \cite{chernozhukov2016empirical} (with appropriate choice of $r$ and $\delta$ so that $r\delta\lesssim r_1^\star(F)$ and $\phi(\delta)\lesssim r_1^\star(F)$ in their notation), we have
\begin{align*}
    & \sup_{F\in\mathcal{F}} P_F \Bigg(   \sup_{z\in\mathbb{R}}  \Bigg|P^\star\left[\sup_{\tilde\psi_{t,x}\in\tilde\Psi} \left(  \mathbb{G}_n^{\star,b}(\tilde\psi_{t,x}) - B(\tilde\psi_{t,x})\right) \le z \right]\\
    & \qquad\qquad\qquad  -  P_F\left[ \sup_{\tilde\psi_{t,x}\in\tilde\Psi} (G(\tilde\psi_{t,x}) - B(\tilde\psi_{t,x})) \le z \right]\Bigg| \ge \varepsilon\Bigg)\\
    & \le \sup_{F\in\mathcal{F}} P_F \left(  \sup_{z\in\mathbb{R}}  P_F\left[ \left|\sup_{\tilde\psi_{t,x}\in\tilde\Psi} (G(\tilde\psi_{t,x}) - B(\tilde\psi_{t,x})) - z\right| \le r_1^\star(F) \right] + \rho_n \ge \varepsilon \right) + \frac{r_2^\star}{\rho_n}  \\
    & \lesssim \sup_{F\in\mathcal{F}}  P_F \left( \frac{2}{\sqrt{\inf_{\tilde\psi_{t,x}\in\tilde\Psi} \Var_F\{G(\tilde\psi_{t,x})\}}} r_1^\star(F) \left(1 + \sqrt{K_n(F)}\right) + \rho_n  \ge \varepsilon \right) + \frac{r_2^\star}{\rho_n}.
\end{align*}
This result, \cref{eq:sa-consistency}, and Assumption \cref{as:assumption6}(ii) give
\begin{align*}
    & \sup_{F\in\mathcal{F}} P_F\Bigg(\sup_{z\in\mathbb{R}}\left| P^\star\left[ \sup_{(t,x)\in\{u,l\}\times\mathcal{X}} \left( \mathbb{G}_n^{\star,b}(\tilde{\psi}^{\text{s.a.}}_{t,x}) - \hat{B}_n(t,x;v) \right) \le z \right]\right. \\
    &\quad\quad\quad\quad\quad\quad\quad\quad - \left.  P_F\left[ \sup_{\tilde\psi_{t,x}\in\tilde\Psi} (G(\tilde\psi_{t,x}) - B(\tilde\psi_{t,x})) \le z \right]  \right| \ge \varepsilon \Bigg) = o(1).
\end{align*}

\end{proof}

\bibliography{refs}
\bibliographystyle{apalike}
\end{document}